\documentclass[preprint]{imsart}

\RequirePackage[OT1]{fontenc}
\RequirePackage{amsthm,amsmath,natbib}
\RequirePackage[colorlinks,citecolor=blue,urlcolor=blue]{hyperref}
\RequirePackage{hypernat}
\usepackage{graphicx,subfig,amssymb}


\startlocaldefs
\numberwithin{equation}{section}
\theoremstyle{plain}

\def\bbR{{\mathbb R}}

\def\E{{\mathbb E}}

\newcommand{\beq}{\begin{equation}} 
\newcommand{\eeq}{\end{equation}} 
\newcommand{\var}{{\rm {Var}}}

\newtheorem{proposition}{Proposition}

\newtheorem{theorem}{Theorem}

\def\what#1{\widehat{#1}}
\def\wtilde#1{\widetilde{#1}}
\def\noi{\noindent}

\endlocaldefs

\begin{document}

\bibliographystyle{imsart-nameyear}

\begin{frontmatter}
\title{Network--wide Statistical Modeling and Prediction of Computer Traffic}
\runtitle{Network Modeling and Prediction}

\begin{aug}
\author{\fnms{Joel} \snm{Vaughan}\thanksref{t1}\ead[label=e1]{rsnation@umich.edu}},
\author{\fnms{Stilian} \snm{Stoev}\thanksref{t1}\ead[label=e2]{sstoev@umich.edu}}
\and
\author{\fnms{George} \snm{Michailidis}\thanksref{t1}
\ead[label=e3]{gmichail@umich.edu}}

\thankstext{t1}{The authors were partially funded by the NSF grant DMS--0806094.}
\runauthor{ Vaughan, Stoev, \& Michailidis}

\affiliation{The University of Michigan, Ann Arbor}

\address{439 West Hall\\
1085 S University\\ 
Ann Arbor, MI, 48109 \\
\printead{e1}\\
\phantom{E-mail:\ }\printead*{e2,e3}}

\end{aug}

\begin{abstract}
In order to maintain consistent quality of service, computer
network engineers face the task of monitoring the traffic fluctuations
on the individual links making up the network. However, due to
resource constraints and limited access, it is not possible to
directly measure all the links. Starting with a physically
interpretable probabilistic model of network–wide traffic, we
demonstrate how an expensively obtained set of measurements may be
used to develop a network--specific model of the traffic across the
network. This model may then be used in conjunction with easily
obtainable measurements to provide more accurate prediction than is
possible with only the inexpensive measurements. We show that the
model, once learned may be used for the same network for many
different periods of traffic.   Finally, we show an
application of the prediction technique to create relevant control
charts for detection and isolation of shifts in network traffic.
\end{abstract}


\begin{keyword}
\kwd{ordinary kriging}
\kwd{network kriging}
\kwd{routing}
\kwd{NetFlow}
\end{keyword}

\end{frontmatter}

\section{Introduction and Preliminaries}

Computer networks consist of nodes (routers and
switches) connected by physical links (optical or copper wires).
Data 
from one node (called a \emph{source})  to another
(\emph{destination})  is sent over the network on predetermined
paths, or routes. 
We will call the stream of data between a particular
source/destination pair a \emph{flow}.  The so--called
\emph{flow--level traffic} may 
traverse only a single link, if the source and destination nodes are
directly connected, or  several links, if they are not.   Also of
interest is the aggregate data traversing each \emph{link}.  The
traffic on a given link is the sum of the traffic of the various flows
using the link.

Both the \emph{flow--level}  and \emph{link--level} traffic have been
studied in the literature.  Flow level data is expensive to
obtain and process, but provides information directly about the
flows.  Data, especially that involving packet delays from
source to destination, has been used to do something. See some
references.  On the other hand, the link level data is less expensive
to obtain, but provides less information about the underlying flows.
These data have been studied extensively by the field of network
tomography.  

This work examines the problem of predicting the traffic level on an
unobserved link via measurements on a subset of the other links in the
network.  Rather than focusing solely on the inexpensive link level data, we
also employ flow--level  measurements to inform a model that can then
be utilized to make solve the prediction problem.  We demonstrate that
this model, although initially requiring the expensive flow--level
data, may be used with a large range of link level data from the same
network.  Thus, this work is the first to combine both flow and link
level data in an efficient and feasible way.

The remainder of the paper is laid out as follows.  Section 2
discusses the computer network framework  and the link--level
prediction problem in detail.  Section 3 describes the proposed
model that utilizes the flow--level traffic.  Section 4 demonstrates
the robustness of the resulting model, while Section 5
illustrates a potential application of the methodologies described
herein in the case when all links may be observed.  

\section{Network Modeling and Prediction}

Here, we introduce some notation and motivate our modeling framework in the
context of network prediction.

\subsection{Global Traffic Modeling in Computer Networks} \label{s:global_model_intro}

Computer networks consist of collections of nodes (routers and
switches) connected by physical links (optical or copper wires). The networks
may be viewed as connected directed or undirected graphs. Figure \ref{fig:scen7} illustrates,
for example, the topology of the Internet2 backbone network \cite{Internet2}, comprised
of 9 nodes (routers) and 26 unidirectional links.

Let $n$, $L$ and ${\cal J}$ denote the number of nodes, links, and routes,
respectively, of a given network. Typically, every node can serve both as a source and a
destination of traffic and thus there are ${\cal J} = n(n-1)$
different routes corresponding to all ordered $(source,destination)$
pairs. Computer traffic from one node to another is routed over predetermined sets of links called 
paths or routes. These paths are best described in terms of the routing matrix
$A = (a_{\ell j})_{L\times {\cal J}}$, where
$$
a_{\ell j}=\left\{\begin{array}{rl} 
1& \mbox{link $\ell$ used in route $j$} \\
0& \mbox{link $\ell$ not used in route $j$ } 
\end{array} \right.   
\ \ \ 1\le \ell \le L,\ 1\le j\le {\cal J}.
$$
The rows of the matrix $A$ correspond to the $L$ links and the columns to the ${\cal J}$ routes.
In the Internet2 network, for example, the route $j$ from Chicago to Kansas City
involves only one link, and thus the $j$--th column of $A$ has a single '$1$' on corresponding to the 
link connecting the two nodes; similarly, the $k$--hop routes correspond to columns of $A$ with
precisely $k$ $1$'s.
 
\begin{figure}[t!]
\includegraphics[width=.7\textwidth]{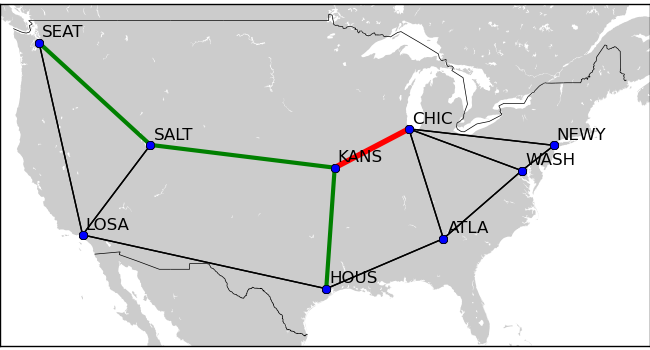}
\caption{The Internet2 topology, with prediction scenario 7 highlighted (see Table \ref{tab:cases}).}
\label{fig:scen7}
\end{figure}

We are interested in the statistical modeling of the traffic on the
{\em entire} network. Let $X(t) =(X_j(t))_{1\le j\le {\cal J}}$  
be the vector of the traffic {\em flows} at time $t$ on all ${\cal J}$ routes, i.e.\ between all
source--destination pairs. That is, $X_j(t),\ t=1,2,\cdots,$ is the number of bytes transmitted 
over route $j$ during the time interval $((t-1)h,th]$, for some fixed $h>0$. Depending on the context,
the time units ($h$) can range from a few milliseconds up to several seconds or even minutes.
Similarly, let $Y(t) = (Y_\ell(t))_{1\le \ell \le L}$ be the vector of the traffic loads at time $t$ over all 
$L$ links. Figures \ref{fig:flows} and \ref{fig:BaOK} illustrate {\em
  flow--}  and {\em link--level} traffic over the Internet2 network.

Assuming that traffic propagates instantaneously through the network,
the load $Y_\ell(t)$ on link $\ell$ at time $t$ equals the cumulative
traffic of all routes 
using this link:
$$
Y_\ell(t)=\sum_{j=1}^{\mathcal{J}} a_{\ell j} X_j(t)
=\sum_{j\in A_\ell} X_j(t),
$$
where $A_\ell \subset \{1,\dots, \mathcal{J}\}$ is the set of routes that
involve link $\ell$. In matrix notation, we obtain
\begin{equation}
\label{routing.eqn1}
Y(t)=A X(t).
\end{equation}
We shall refer to \eqref{routing.eqn1} as to the {\it routing equation}.
In practice, this relationship between the \emph{flow--level} traffic $X(t)$ and
the \emph{link--level} traffic $Y(t)$ is essentially exact, provided that the 
time scale of measurement $h$ is comparable or greater than the maximum {\it
round trip time} (RTT) for packets in the network.  In this paper, we
consider aggregate data over 10 second intervals, a time substantially
greater the the RTT of a packet over the Internet2 backbone.

\begin{figure}
\includegraphics[width=.7\textwidth]{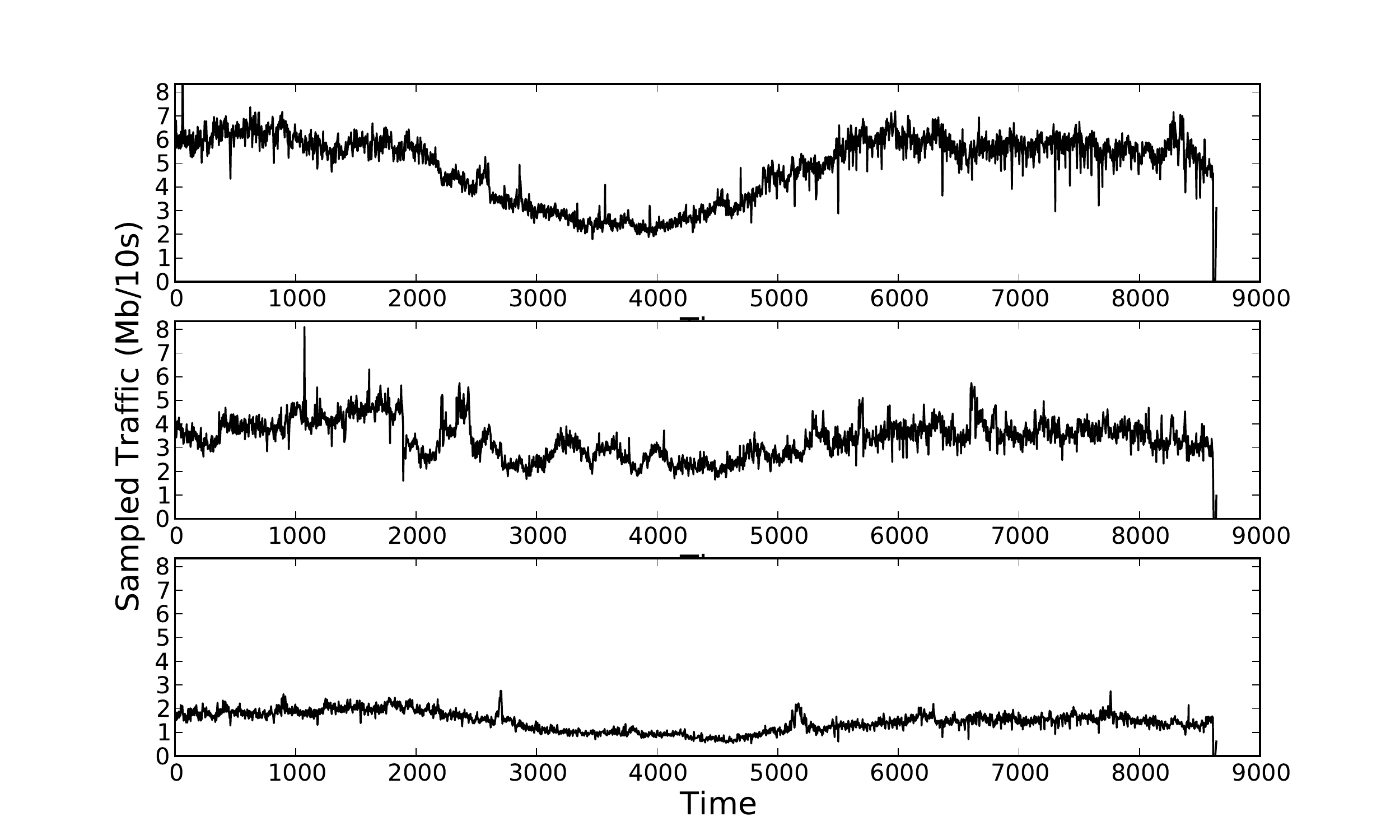}
\caption{Sampled traffic flow ($X_j(t)$) time series for high, medium, and low 
 volume flows on the Internet2 network, recovered from NetFlow data (see 
 Appendix \ref{sec:map}, for technical details).}
\label{fig:flows}
\end{figure}

The statistical behavior of the network traffic traces over {\em individual} links or 
flows has been studied extensively in the past 20 years. An important {\em burstiness} 
phenomenon was observed, which rendered the classical telephone traffic models based on
Poisson and Markov processes inapplicable. Mechanistic and physically interpretable 
models were developed that explain and relate the observed burstiness to the notions of
long--range dependence and self--similarity.  For more details, see
eg \cite{willinger:taqqu:leland:wilson:1995SS},
\cite{taqqu:willinger:sherman:1997}, \cite{park:willinger:2000}, 
\cite{mikosch:resnick:rootzen:stegeman:2002},
\cite{stoev:michailidis:vaughan:2010}, and the references therein.

Under general conditions, limit theorems show that the cumulative
fluctuations of the traffic about its mean can be naturally
modeled with {\em fractional Brownian motions} (fBm). The fBm's are zero mean
Gaussian processes, which have stationary increments and are {\em
self--similar}. For more details, see \cite{willinger:paxson:riedi:taqqu:2003-livre} 
or Section \ref{s:adet} below. Other limit regimes are also possible,
which lead to stable infinite variance models. These models,
however, are less robust to network configurations than the fractional Brownian motion and are
encountered less frequently in practice (see eg \cite{mikosch:samorodnitsky:2007}).

Here, our focus is not on the temporal dependence of traffic,
which has been studied extensively.  Rather, our goal is to model the global
structural relationship between all links on the network at a fixed
instant of time, $t_0$.  Motivated by existing single--flow models, 
we shall suppose that the vector $  X(t)$ representing the traffic on all flows of
the network is multivariate normal: 
\begin{equation}
  X(t) \sim N ( \mu_X(t), \Sigma_X(t)).
\label{eqn:xdist}
\end{equation}
As discussed in \cite{stoev:michailidis:vaughan:2010}, the
link--level, as well as flow--level traffic can be well--modeled by
using multivariate long--range dependent Gaussian processes. The routing
equation \eqref{routing.eqn1} yields: 
\begin{equation}
   Y(t) \sim N (A  \mu_X(t), A\Sigma_X(t)A^t).
 \label{eqn:ydist}
\end{equation}

A first natural application and in fact our motivation to develop
global network traffic models comes from the {\em
 network prediction} problem. The performance of our models
 will be explored and illustrated in this context.

\medskip
\noindent {\sc Network Prediction (Kriging):} \begin{it} \label{pr:kriging}
Observed are the traffic traces on a subset of links ${\cal O} \subset \{1,\cdots,L\}$:
\begin{equation}\label{e:D}
 {\cal D}(t_0,m) := \{ Y_\ell(t),\ t_0-m+1\le t\le t_0,\ \ell \in {\cal O} \}.
\end{equation}
 over the time window $t_0-m+1 \le t \le t_0$ of size $m$.

 Obtain estimators $\what Y_\ell (t_0)$ for the traffic $Y_\ell(t_0)$,
 for all unobserved links $\ell \in {\cal U} := \{1,\cdots, L\}
 \setminus {\cal O}$ at time $t_0$ in terms of the data ${\cal D}(t_0,m)$. 
\end{it}

\smallskip
In view of the Gaussianity of our global traffic model, we focus on linear predictors.
In the next section, we start by adapting the 
classical {\em ordinary kriging} methodology from Geostatistics to the
networking context.  This is perhaps the best that one 
can do given no prior statistical information about the network other
than its topology and routing.  Even in such a limited 
setting it turns out that one can sometimes obtain useful predictors.

In many $({\cal O},{\cal U})$--scenarios, however, {\em ordinary
 network kriging} does not provide helpful results.  The main problem is the lack of information 
about the traffic means (and to a lesser degree covariances) on the unobserved
links.  This issue can only be resolved by incorporating additional,
network--specific information about the traffic.
We do so, in Section \ref{sec:network-model}, where by exploring (off--line)
large NetFlow data sets, we recover estimates of all flows $X_j(t)$'s
in the network. These data are used to build a flexible
network--specific model that can be estimated {\em on--line} from
link--level measurements. The resulting model will be shown to 
substantially outperform the {\em ordinary network kriging} methodology.

\subsection{General Theory}\label{sec:general}

If the means and covariances are known, then the network prediction problem in the
previous section becomes the {\em simple kriging} problem from Spatial Statistics. 
The best linear unbiased predictor (BLUP) for $Y_u=Y_u(t)$ in terms of
$Y_o=Y_o(t)$ is then given by: 
\begin{equation}
\widehat{Y}_u= {\mu}_u+\Sigma_{uo}\Sigma^{-1}_{oo}( {Y}_o- {\mu}_o),
\label{eqn:krig-est}
\end{equation} 
where 
$$
Y = \left(\begin{array}{l} Y_o \\ Y_u \end{array} \right),\ \mu_Y
=  \left(\begin{array}{l} \mu_o \\ \mu_u \end{array} \right),\ \mbox{ and }\ 
\Sigma_{Y} = \left(\begin{array}{ll} 
\Sigma_{oo} & \Sigma_{ou}\\
 \Sigma_{uo} & \Sigma_{uu}
\end{array} \right).
$$
Here $\E Y(t) = \mu_Y(t)$ and $\Sigma_Y=\Sigma_{Y}(t)$ are the mean
and the covariance matrix of the vector $Y(t)$, which is partitioned into an 
\emph{observed} and \emph{unobserved} components $Y_o(t)$ and $Y_u(t)$, respectively. 
For simplicity, we omit the argument `$t$' in \eqref{eqn:krig-est},
but it is implicitly present in all quantities therein. 
The (conditional) covariance matrix of prediction errors $  Y_u - \what Y_u$ is given by:
\begin{equation}
\E {\Big(} ( {Y}_u-\widehat{Y}_u)( {Y}_u-\widehat{Y}_u)^t | Y_o {\Big)}  
= \Sigma_{uu}-\Sigma_{uo}\Sigma^{-1}_{oo}\Sigma_{ou}.
\label{eqn:krig-mse}
\end{equation}

The performance of the simple kriging predictor is illustrated in
Figure \ref{fig:BaOK} (top), and Table \ref{tab:okrig} ($2$nd
column). In these cases, $\mu_Y(t)$ and $\Sigma_Y(t)$ are estimated
from moving windows of past observations, where data on all links is
available (see Appendix \ref{ap:okrig}). In the context of our network
prediction problem, however, not all links are observed, the means and
covariances are unknown, and the {\em simple kriging} methodology is
not practical. Nevertheless, it provides a theoretically optimal
benchmark that we will be used to evaluate all methods developed in the sequel.

\begin{figure}[t!]
\includegraphics[width=\textwidth, height= 3 in]{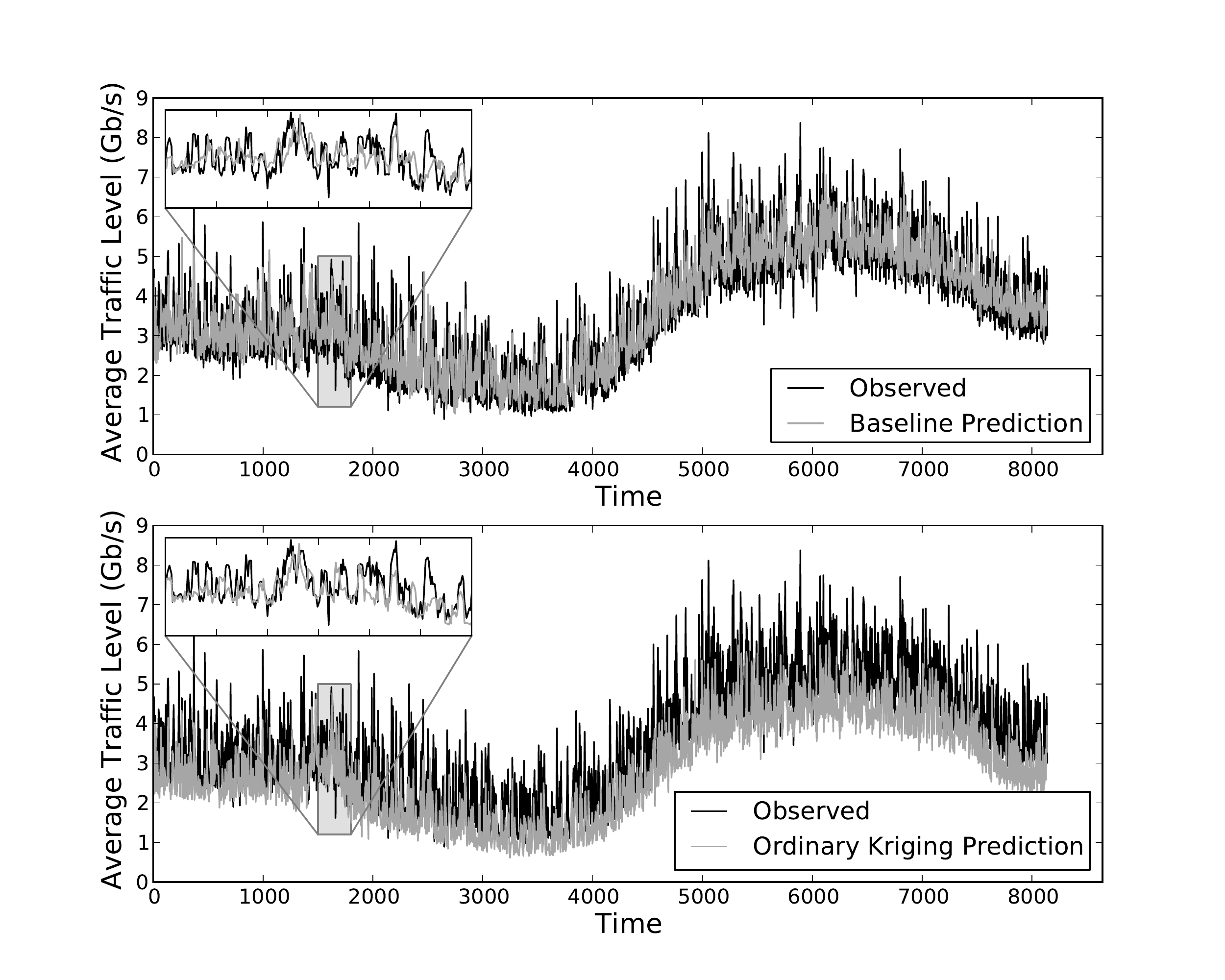}
\caption{{\it Top:} the \emph{standard kriging} estimator
  based on past empirical means and covariances. This
  \emph{baseline} estimator is not available in practice when only a
  subset of links is observed. {\it Bottom:} the \emph{ordinary kriging} 
  estimator. Both plots represent Scenario 7: the prediction of link 13 via 
  links 3, 9, and 12, depicted in Figure \ref{fig:scen7}.}
\label{fig:BaOK}
\end{figure}

The relative mean squared errors (ReMSE) of the predictors reported in Table 
\ref{tab:okrig}, and in the sequel, are computed as follows:
\begin{equation}\label{e:ReMSE}
 {\rm ReMSE}(\what Y) = \sum_{t=1}^T \| \what Y(t) - Y(t)\|^2 / \sum_{t=1}^T \|Y(t)\|^2.
\end{equation}
Here $\what Y(t)$ is a predictor for the true value $Y(t)$ and
$\|\cdot\|$ stands for the Euclidean norm.  
The ReMSE's quantify empirically the prediction error relative to the
energy of the true signal $Y(t)$, over the 
duration $T$.  In a controlled setting where $Y(t)$ is available, the
ReMSE's allow us to objectively compare the performance of various estimators.

A first practical solution to the network prediction problem may be
obtained by using the {\em ordinary kriging} methodology. 
Namely, suppose that we have no prior information about the
statistical behavior of the traffic over the network, but the 
routing matrix is available.  Then reasonable working assumptions are that
$\E Y(t) = \mu(t)   \vec{1}$ and 
$\Sigma_X(t) = \sigma_X^2(t) \mathbb I_{{\cal J}}$, where $\mu(t)\in \mathbb R$ and
$\sigma_X(t)>0$ are unknown constant parameters, which vary slowly with time $t$.
That is, the traffic means of all links are equal to $\mu(t)$, and the covariance
matrix of the flow--level traffic is scalar. Using these assumptions
together with the routing equation \eqref{routing.eqn1}, one can obtain the 
best linear unbiased predictor of $Y_u(t)$ (see Appendix \ref{ap:okrig} and 
\cite{cressie:1993} for more details). 

\begin{table}
\begin{center}
\begin{tabular}{r|rr|rrrr}
  \hline
Scenario& Baseline  & 2009-02-19 & 2009-02-18 & 2009-02-20 & 2009-02-26 & 2009-03-12 \\ 
  \hline
1 & 0.0305 & 0.4052 & 0.4212 & 0.4250 & 0.4383 & 0.3708\\
2 & 0.0287 & 0.1266 & 0.1194 & 0.1292 & 0.1072 & 0.1068\\
3 & 0.0288 & 0.3279 & 0.3315 & 0.3432 & 0.3151 & 0.3368\\
4 & 0.0290 & 0.4193 & 0.4342 & 0.4391 & 0.3922 & 0.4460\\
\hline
5 & 0.0314 & 0.1209 & 0.0807 & 0.0958 & 0.0962 & 0.1122\\
6 & 0.0285 & 1.0241 & 0.8644 & 0.9323 & 0.8897 & 0.6881\\
7 & 0.0262 & 0.1129 & 0.1225 & 0.1241 & 0.1330 & 0.1435\\
8 & 0.0216 & 0.0614 & 0.0585 & 0.0628 & 0.0805 & 0.0880\\
9 & 0.0242 & 0.1079 & 0.1011 & 0.1059 & 0.1463 & 0.1294\\
\hline
10 & 0.0766 & 12.6471 & 10.5816 & 10.2204 & 10.6031 & 10.1767\\
11 & 0.0727 & 0.8423 & 0.7394 & 0.6346 & 0.6182 & 0.7268\\
12 & 0.0723 & 0.2649 & 0.2338 & 0.2274 & 0.2132 & 0.2486\\
   \hline
\end{tabular}
\end{center}
\caption{{\em Columns 2 and 3}: ReMSE's (see \eqref{e:ReMSE})
of the baseline (simple kriging) and ordinary kriging estimators for
February 19, 2009, in 12 prediction scenarios (Tables  \ref{tab:I2links} 
and \ref{tab:cases}). {\em Columns 4 to 7:} ReMSE's for the ordinary 
kriging estimators over 4 additional days in each scenario.} 
\label{tab:okrig}

\end{table}

Figure \ref{fig:BaOK} (bottom) and Table \ref{tab:okrig} demonstrate the
performance of this ordinary kriging methodology.   
It is certainly uniformly inferior to the benchmark estimator in Figure \ref{fig:BaOK}
(top) and Table \ref{tab:okrig} (column 2), which involves data on the unobserved
links.  Nevertheless, ordinary kriging may be useful in cases when no
structural information about the network is available.

In the following section, we will develop a model that improves
substantially upon the ordinary kriging methodology.  This can be
done, however, only by utilizing further information about the network.

\section{Network Specific Modeling via NetFlow Data}
 \label{sec:network-model}

\subsection{Modeling Traffic Means} \label{sec:means}

Direct measurements of the flow--level traffic $  X (t)$ are very expensive 
to obtain because this would involve examining the entire traffic load of the network, 
i.e.\ storing and then processing 95--170 Gigabytes of data per day. Modern 
routers, however, implement a mechanism called ``NetFlow'', which allow for random (Bernoulli)
sampling  of the flow of traversing packets. The routers  store 
important information such as the ports, source and destination IP
addresses, etc. from a sample of packet headers.
Even though in fast backbone networks (eg Internet2) the practical sampling rates are eg
1 out of 100 packets, the NetFlow mechanism provides unique information about the traffic loads 
in the network. Using a careful mapping procedure, we assigned the sampled packets 
to one of the 72 source/destination flows.  We thus constructed an estimate $\{\wtilde X(t)\} 
\approx \{  X (t)\}$ of the flow--level traffic.  Unfortunately, this method is computationally
expensive to implement, which makes it impractical to use repeatedly, and it is difficult 
to apply in an on--line fashion.  Therefore, the information derived from NetFlow can only be 
viewed as auxiliary data in the context of network prediction.  We
shall use this information to
build a flexible network--specific model that can be estimated on--line.

\begin{figure}[t!]
\subfloat[Windowed
Means \label{fig:hm}]{\includegraphics[width=.49\textwidth]{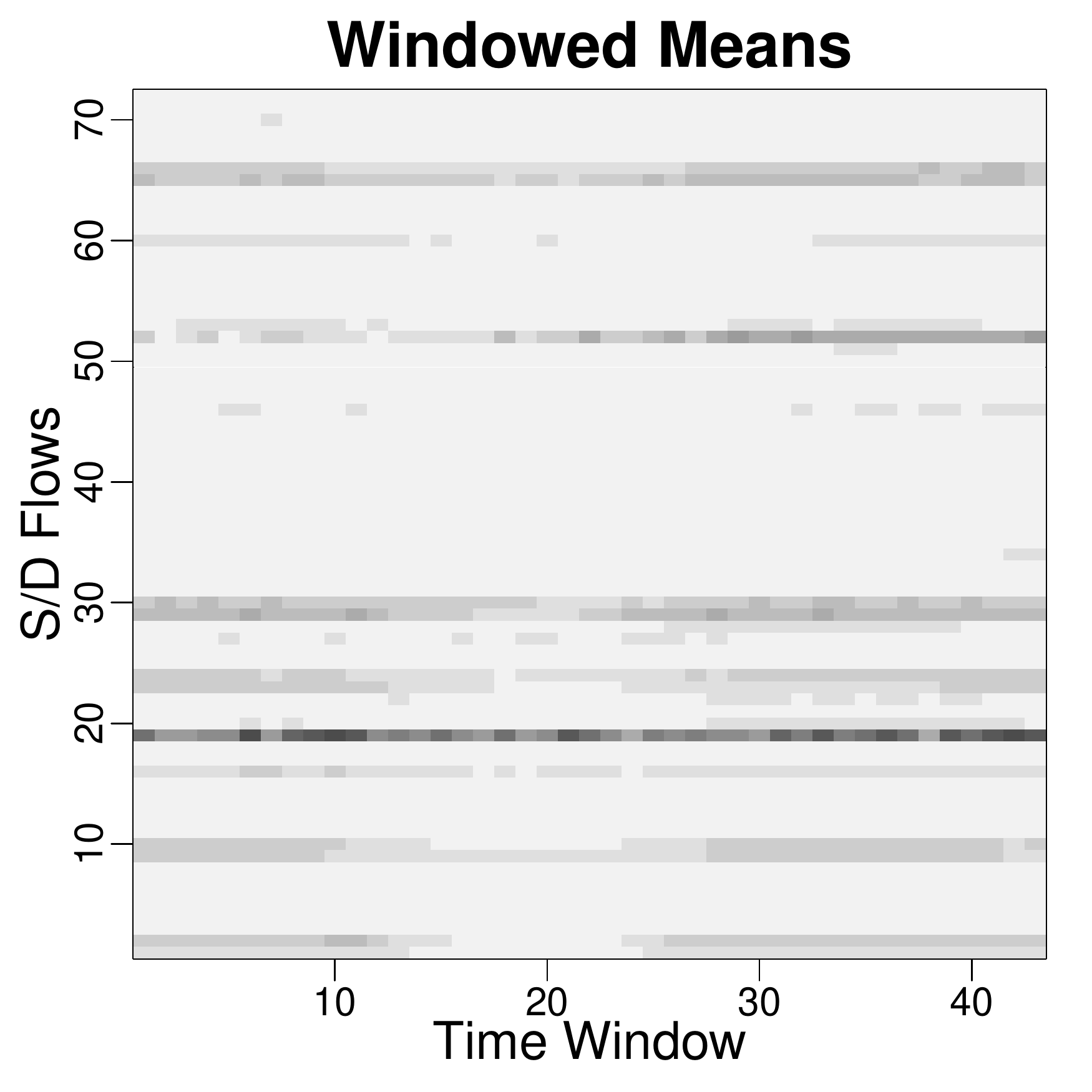}}
\subfloat[Energy \label{fig:rolep}]{\includegraphics[width=.46\textwidth]{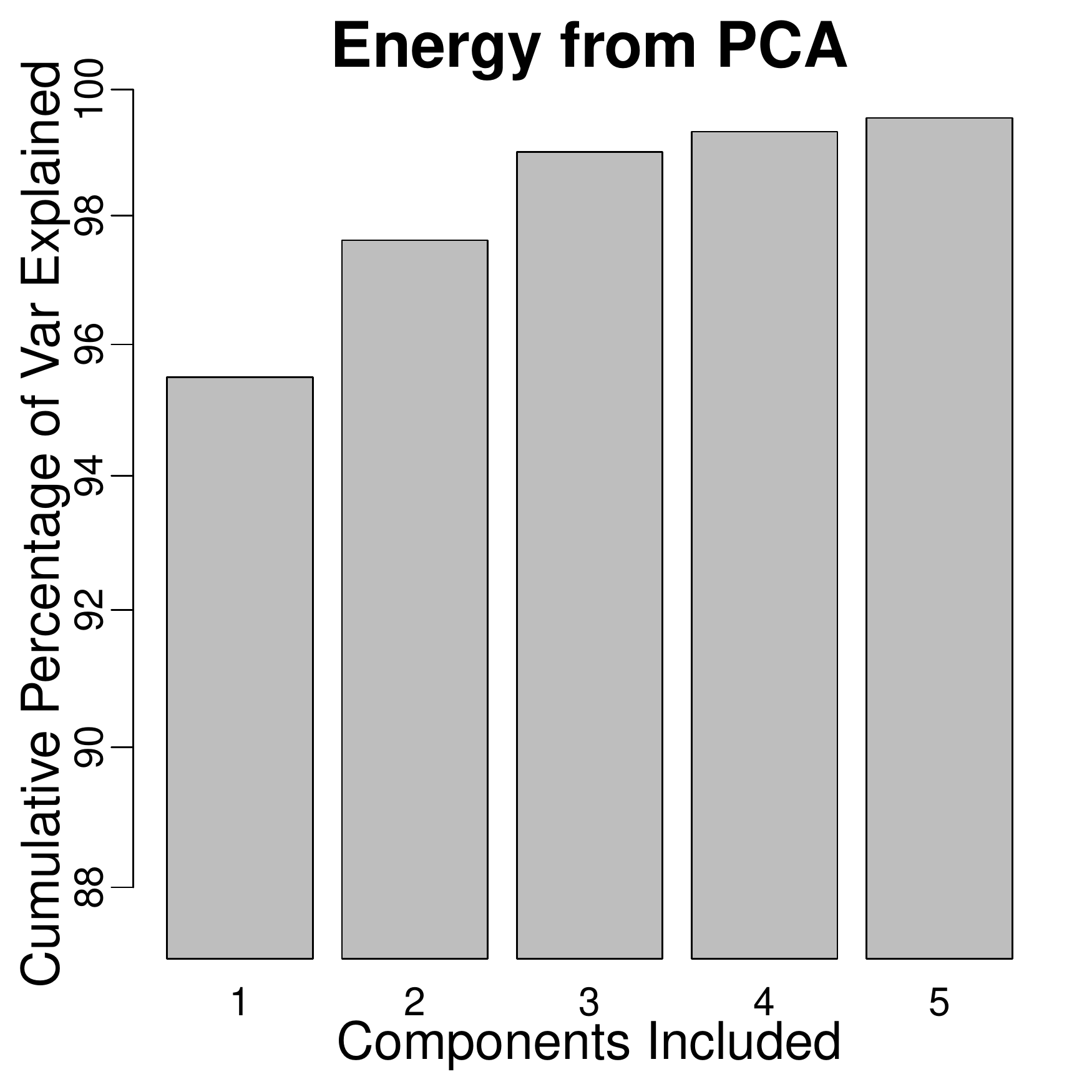}}
\caption{{\it Left:} The columns correspond to local sample means over
  consecutive windows of 2000 seconds for each of the ${\cal J}=72$ flows.
  Darker shades indicate higher values.  The data were reconstructed from NetFlow
  measurements of the Internet2 network for Feb 19, 2009.
{\it Right:} cumulative energy captured by the $F$ matrix for increasing values 
 of $p$ (see \eqref{eq:Fmod} and Proposition \ref{p:PCA} below.)}
\end{figure}

Figure \ref{fig:hm} illustrates the  local means of the
source/destination flows as a function of time, where the data  
were derived from an extensive analysis of NetFlow measurements.  This
suggests that a linear model for $  \mu_X(t)$  
with a few constant factors can capture much of the variability in
the local means. We therefore posit the model 
\begin{equation}
  \mu_X(t) = F  \beta(t),
\label{eq:Fmod}
\end{equation}  
where $F$ is a suitably chosen $\mathcal{J}\times p$ matrix and $\beta
= \beta(t) \in \mathbb R^p$ is a parameter. Observe that
$$
 \mu_Y=A\mu_X=AF\beta\ \ \mbox{ and also } \ \ \mu_{Y_{\rm o}} = A_oF \beta.
$$
Provided $p$ equals ${\rm rank}(A_oF)$ the parameter $\beta$ can be successfully 
estimated by using linear regression from the available data on the {\em observed links}.
We will see that this essentially means that $p$ is no greater than the number 
of observed links $|{\cal O}|$.

Now, our goal is, given $p$, to choose $F$ {\em optimally} so that $F\beta$ can approximate best $\mu_X$
with a suitable $\beta$.  Consider the sample $\wtilde X(t),\ 1\le
t\le n$, of the flow--level data derived from the NetFlow mapping, 
where $n = w\times n_w$. Partition the data into $n_w$ windows of size $w$, and let
$$
\overline X(k) = \frac{1}{w} \sum_{i=1}^w \wtilde X((k-1)w+i)\
$$
$(1\le k \le n),$ be the sample mean the $\wtilde X(t)$'s in the $k$--th window.

Consider first the set of $n_w$ points $\{\overline X(k),\ 1\le k\le n_w\}$
in $\bbR^L$ and observe that the model in \eqref{eq:Fmod} postulates that $\mu_X$
belongs to ${\rm range}(F)$ (the linear space spanned by the columns of $F$).  Thus, 
given the $\overline X(k)$'s, a {\it least squares optimal}  
choice of $F$ corresponds to minimizing the sum of the squared
distances from the $\overline X(k)$'s to  
the $p-$dimensional subspace $W:= {\rm range}(F).$  That is, we want to find 
$$
 W^* = \mathop{{\rm Argmin}}_{W \le \bbR^L,\ {\rm dim}(W) = p} 
 \sum_{k=1}^{n_w} \|\overline X(k) - P_W(\overline X(k))\|^2,
$$
where $P_W$ denotes the orthogonal projection onto the subspace $W$.
The following result shows that this problem has a simple solution,
which corresponds precisely to performing {\it principal component analysis} 
(PCA) on a certain matrix.

\begin{proposition}\label{p:PCA} Let $x(k)\in \bbR^m,\ 1\le k\le n$. Consider
the positive semidefinite  $m\times m$ matrix $B=\sum_{k=1}^{n}  x(k) x(k)^t$ and let 
$
B = \sum_{j=1}^m \lambda_j   b_j  {b}_j^t,
$ 
be its spectral decomposition, where $  b_j,\ 1\le j\le m$ are orthonormal
and $\lambda_1\geq\lambda_2\geq\cdots\geq\lambda_m\geq 0$.  

Set $W^* = {\rm span}\{ {b}_1,\dots, {b}_p\}$, $1\le p\le m$.
Then, for all $W\le \bbR^m$ with ${\rm dim}(W) = p$, 
we have that
\begin{equation}\label{e:p:PCA}
 \sum_{k=1}^m \|x(k) - P_{W^*}(x(k))\|^2 \equiv \sum_{j=p+1}^m
 \lambda_j \le  \sum_{k=1}^n \|x(k) - P_W(x(k))\|^2, 
\end{equation}
where $P_W$ denotes the orthogonal projection onto the subspace $W$.
\end{proposition}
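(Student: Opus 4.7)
The plan is to reduce the minimization of the residual sum of squares to a maximization problem over the projection, and then invoke the standard Ky Fan / Courant--Fischer variational characterization of the top eigenvalues of a symmetric positive semidefinite matrix.

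First, I would fix an arbitrary $p$--dimensional subspace $W \le \mathbb{R}^m$ with orthonormal basis $u_1,\dots,u_p$, so that $P_W = \sum_{i=1}^p u_i u_i^t$. Since $x(k) - P_W(x(k))$ is orthogonal to $P_W(x(k))$, Pythagoras gives
\begin{equation*}
\sum_{k=1}^n \|x(k) - P_W(x(k))\|^2 \;=\; \sum_{k=1}^n \|x(k)\|^2 - \sum_{k=1}^n \|P_W(x(k))\|^2.
\end{equation*}
The first term on the right is independent of $W$ and equals $\mathrm{tr}(B)=\sum_{j=1}^m \lambda_j$, so minimizing the residual sum is equivalent to \emph{maximizing} $\sum_k \|P_W(x(k))\|^2$.

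Next I would rewrite the quantity to maximize in terms of $B$. Using $\|P_W(x(k))\|^2 = \sum_{i=1}^p \langle x(k), u_i\rangle^2 = \sum_{i=1}^p u_i^t x(k) x(k)^t u_i$ and exchanging the order of summation,
\begin{equation*}
\sum_{k=1}^n \|P_W(x(k))\|^2 \;=\; \sum_{i=1}^p u_i^t \Bigl(\sum_{k=1}^n x(k) x(k)^t\Bigr) u_i \;=\; \sum_{i=1}^p u_i^t B u_i.
\end{equation*}
The task is therefore to maximize $\sum_{i=1}^p u_i^t B u_i$ over orthonormal systems $\{u_1,\dots,u_p\}$ in $\mathbb{R}^m$, which is exactly the Ky Fan trace maximum principle. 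The choice $u_i = b_i$ yields $\sum_{i=1}^p \lambda_i$, and I would show this is optimal.

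The last and main step is verifying this optimality bound. One clean way is to write $u_i = \sum_{j=1}^m c_{ij}\, b_j$ and note that $\sum_{i=1}^p u_i^t B u_i = \sum_{j=1}^m \lambda_j \alpha_j$ where $\alpha_j := \sum_{i=1}^p c_{ij}^2$. Orthonormality of the $u_i$ implies $0 \le \alpha_j \le 1$ (since $\sum_i c_{ij}^2$ is a diagonal entry of the Gram matrix $C^t C$ of a matrix $C$ with orthonormal columns, hence bounded by $1$), while $\sum_{j=1}^m \alpha_j = \sum_{i=1}^p \|u_i\|^2 = p$. Maximizing $\sum_j \lambda_j \alpha_j$ subject to these constraints, with $\lambda_1 \ge \cdots \ge \lambda_m$, gives $\sum_{i=1}^p \lambda_i$, attained by $\alpha_1=\cdots=\alpha_p=1$. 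This ``weight-shifting'' estimate is the only real obstacle; once it is in place, combining with the first display yields
\begin{equation*}
\sum_{k=1}^n \|x(k) - P_W(x(k))\|^2 \;\ge\; \sum_{j=1}^m \lambda_j - \sum_{i=1}^p \lambda_i \;=\; \sum_{j=p+1}^m \lambda_j,
\end{equation*}
with equality for $W = W^*$, which is \eqref{e:p:PCA}.
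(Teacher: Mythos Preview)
Your argument is correct and follows essentially the same route as the paper's proof: both reduce the problem to the Rayleigh--Ritz/Ky~Fan variational characterization of the extreme eigenvalues of $B$, with the only cosmetic difference that the paper works directly with an orthonormal basis $\{f_{p+1},\dots,f_m\}$ of $W^\perp$ (so the residual sum equals $\sum_{j>p} f_j^t B f_j$), whereas you pass through Pythagoras to maximize $\sum_{i\le p} u_i^t B u_i$ over orthonormal $u_i$'s spanning $W$. Your weight-shifting justification actually fills in the step the paper leaves as ``clearly''; one small wording fix: the bound $\alpha_j\le 1$ comes from $\alpha_j=\|P_W(b_j)\|^2\le\|b_j\|^2=1$ (equivalently, a diagonal entry of the projection $CC^t$ when $C$ has orthonormal \emph{columns}, not of $C^tC=I_p$).
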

\noindent
The proof is given in Appendix\ref{sec:proofs}. This result implies that the least square optimal 
choice of the matrix $F$ is $ F = ( {b}_1,\cdots, {b}_p)_{{\cal J}\times p}$, where 
the $  b_i$'s, $1\le i\le p$ are the eigenvectors of the largest $p$ eigenvalues of 
the matrix 
$
 B = \sum_{k =1}^{n_w} \overline X(k) \overline X(k)^t.
$
Figure \ref{fig:rolep} shows that just a few PCA factors $p$ are enough to
capture a large percentage of the local variability of the mean vectors $\overline X(t)$.
Figure \ref{fig:MaFM} (top) illustrates the prediction performance of this model when the covariance 
matrix is known, but the {\em unobserved} means are estimated from the model.

\begin{figure}[t!]
\includegraphics[width=\textwidth,height=3in]{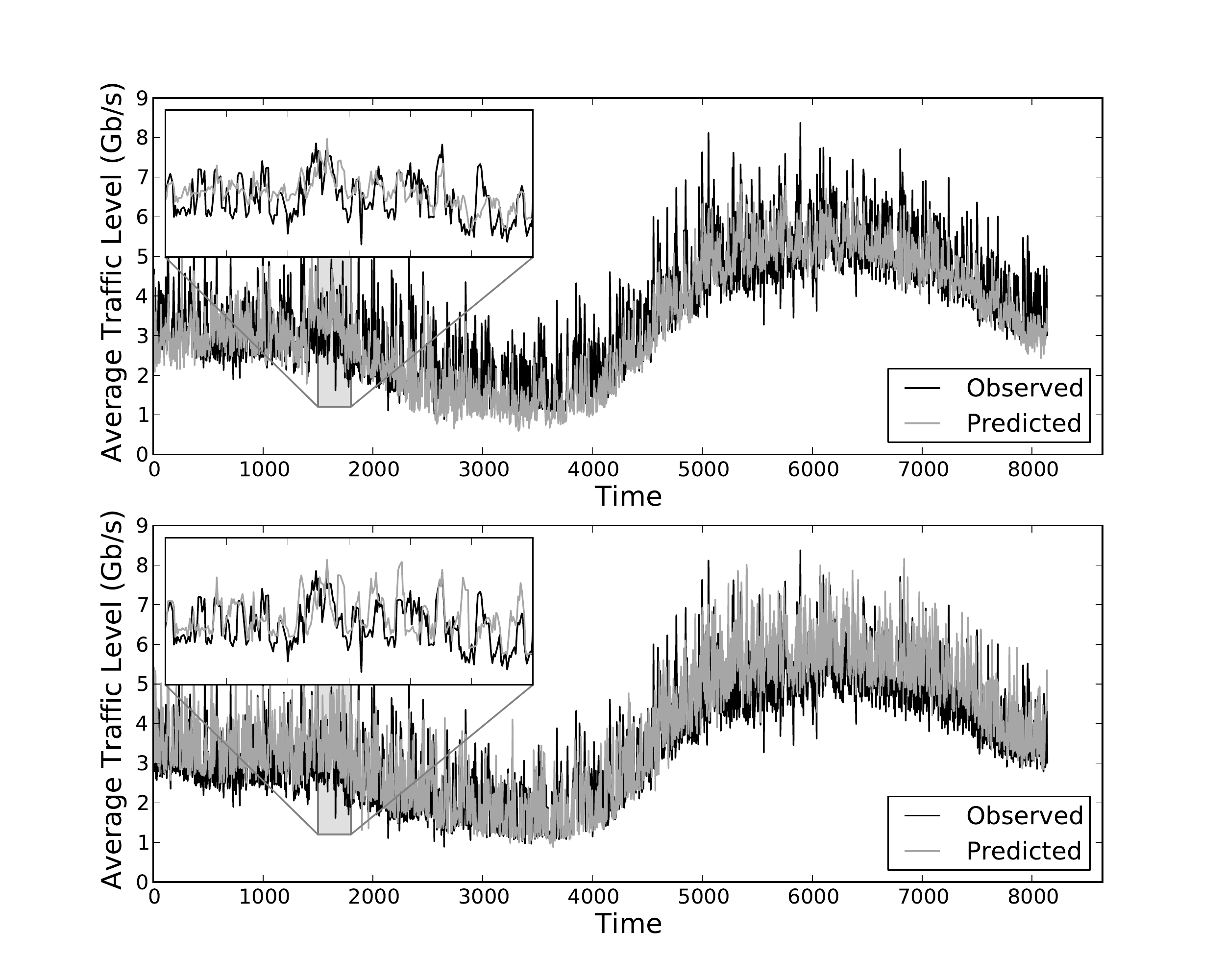}
\caption{{\it Top:} Prediction in Scenario 7 (Table \ref{tab:cases} and Figure \ref{fig:scen7})
 using the PCA--mean model (with $p=2$) and the sample covariance matrix. In reality, the sample
 covariances for unobserved links are not available, and this plot merely 
 illustrates that the model \eqref{eq:Fmod} successfully captures the structure of 
 the means. See also Figure \ref{fig:BaOK}. {\it Bottom:} Prediction in Scenario 7 using the 
 complete mean--variance model with $p=2$. } 
\label{fig:MaFM}
\end{figure}

\subsection{Modeling the Covariances}

The physical nature of network protocols, the mechanisms of
transmission, and the user behavior imply strong relationship between
the means and the variances of traffic traces.  This relationship was
shown to be ubiquitous over different types of computer networks.  In the 
field of network tomography, for example, the mean--variance models
have been successfully used to resolve challenging identifiability 
questions (See eg
\cite{vardi:1996,xi:michailidis:nair:2006,lawrence:michailidis:nair:2006,singhal:michailidis:2007}).  
In our context, 
we also encountered a strong relationship between the means and the
variances of {\em traffic flows}.  More precisely, by exploring the 
sample means $\overline X_j(t)$ and standard errors $S_j(t)$,
calculated over a window of traffic data, we observed that 
\begin{equation}\label{e:Sj_vs_Xj}
 S_j(t) \approx C (\overline X_j(t))^\gamma,\  \ 1\le j \le {\cal  J}
\end{equation}
with $\gamma \approx 3/4$.  Namely, the standard error of a
source--destination flow $X_j(t)$ is proportional to a 
power of its mean.  

We estimated $\gamma$ (as a function of $t$) by performing  
log--linear regression of $S_j(t)$ versus $\overline X_j(t)$ {\em over } 
$j,\ 1\le j\le {\cal J}$. The resulting estimates remained approximately constant in $t$ and close to 
$3/4$ regardless of the time window used.  The power--law relationship is remarkably 
consistent in time and the regression diagnostics $R^2\approx 80\%$ indicate strong agreement with
the model.  For more details, see Figure \ref{fig:gamest} and the
discussion in Section \ref{sec:cal}.

\cite{singhal:michailidis:2007} have shown that the components of the vector 
of $  X(t)$ are, in practice, uncorrelated or at most weakly
correlated.  The principal exception are pairs of {\em forward} and {\em reverse} flows, eg the
Chicago--Los Angeles and Los Angeles--Chicago.  This correlation 
is arguably due to the feedback mechanism built in the TCP protocol.
Our experience with NetFlow on Internet2 (eg Fig.\ 2 in \cite{stoev:michailidis:vaughan:2010}) 
and limited NS2--simulations (\cite{NS2}) confirm that this correlation is negligible 
at our time scales of interest, provided that the network is not congested. 
The study of heavy traffic scenarios beyond the operating
characteristics of the network is interesting but it is  
outside the scope of the present work.  Therefore, in this paper we
shall model $\Sigma_X(t)$ as a diagonal matrix. 

In view of this analysis, we shall impose the following structure on
the flow covariances: 
\begin{equation}
 \Sigma_X=\sigma^2\mbox{diag}(|F\beta|^{2\gamma} ),
\label{eq:Fcov}
\end{equation}
where $|a|^{2\gamma}$ denotes $(|a_i|^{2\gamma})_{i=1}^{\cal J},$ for
$a=(a_i)_{i=1}^{\cal J} \in\mathbb R^{\cal J}$, and where $\mu_X = F\beta$.

We will show below that the parameter $\sigma$ can be estimated
on--line from link--level data ($Y_\ell$'s). On the other hand, $\gamma$  
is a structural parameter, obtained from the off--line analysis of
NetFlow data.  (See Section \ref{sec:cal} for more details.)

\subsection{The Joint Model}
\label{s:joint_model}

Combining the mean and covariance models from the previous two
sections, we obtain the following complete model:
\begin{equation}
 Y(t) = AF\beta + \sigma A {\rm diag}(|F\beta|^\gamma)   Z(t),
\label{eq:CompMod}
\end{equation}
where $  Z(t)\sim{\mathcal N}(  0, I_{\cal J})$ is a standard
normal vector in $\mathbb R^{\cal J}$ 
and where $\beta\in \mathbb R^p$ and $\sigma>0$ are unknown
parameters. In this section, we will show how this model can be 
estimated from {\em on--line} measurements on a limited set of 
observed links ${\cal O}$. We will also establish asymptotic properties 
of the proposed estimators.

In the framework of the Network Prediction Problem (see Section \ref{s:global_model_intro}), 
we obtain
$$
 \overline Y_o(t_0) = A_o F \beta + \epsilon_{\overline Y_o}(t_0),
$$
where
\begin{equation}\label{e:Yo-bar}
 \overline Y_o(t_0) = \frac{1}{m} \sum_{k=0}^{m-1} Y_o(t_0-k).
\end{equation}
To establish the covariance structure of the noise $\epsilon_{\overline Y_o}$, we
introduce the mild assumption that the flow--level traffic is stationary 
(in practice, traffic is locally stationary on the time scales of interest) 
and its temporal correlation structure is the same across all routes. Namely, that 
$
 {\rm Corr}(X_j(t),X_j(t+i)) = \rho(i),\ 1\le i\le m-1,\ (1\le j\le {\cal J}).
$
This yields 
$$
 {\rm Corr}(Y_\ell(t+i), Y_\ell(t)) = \rho(i),\ 1\le i\le m-1,\ \ \mbox{ for all }1\le \ell \le L,
$$
and consequently 
\begin{equation}\label{e:epsilon-Y-bar}
 \epsilon_{\overline Y_o}(t_0) \sim {\cal N}(0, \sigma_m^2 A_o {\rm diag}(|F\beta|^{2\gamma})A_o^t),
\end{equation}
where 
\begin{equation}\label{e:sigma-m}
 \sigma_m^2 = \frac{\sigma^2}{m} {\Big(} 1+2\sum_{i=1}^{m-1}(1-i/m) \rho(i) {\Big)}.
\end{equation}

The structure of the noise variance suggests a natural iterated
generalized least squares (iGLS) scheme for the estimation of $\beta$.  

\medskip
\noi {\bf Algorithm:} {\em (Iterated GLS)}

\smallskip
 {\it (i)} Set $\what \beta_1 = [(A_oF)^t A_oF]^{-1} (A_oF)^t
 \overline Y_o(t_0)$ to be the OLS (ordinary least squares) estimate
 of $\beta$ and 
let $k:=1$.

\smallskip
 {\it (ii)} Set 
\begin{equation}\label{e:beta-k}
\what \beta_{k+1} = [(A_oF)^t G(\what \beta_k) A_oF]^{-1} (A_oF)^t
G(\what \beta_k) \overline Y_o(t_0), 
\end{equation}
where 
\begin{equation}\label{e:G(beta)}
 G(\beta) := [ A_o {\rm diag}(|F \beta|^{2\gamma})A_o^t]^{-1}.
\end{equation}

\smallskip
{\it (iii)} Set $k:=k+1$ and repeat step {\it (ii)}.  Iterate until
$\|\what \beta_{k+1} -\what \beta_k\|$ falls below 
 a certain ``convergence'' threshold.

\medskip
Observe that the temporal correlation structure does not need to be
estimated here since it appears only in the scalar coefficient $\sigma_m^2$ 
of the noise variance, which cancels in \eqref{e:beta-k}. The above iGLS scheme requires 
that the matrices involved in steps {\it (i)} and {\it (ii)} be invertible.  The
following result ensures that this is indeed the case under mild 
natural conditions. 

\begin{proposition}\label{p:GLS} Suppose that $F\beta>\vec 0$ and let
  $A_o$ be of full row--rank.  Then: 

{\it (i)} The inverse $G(\beta)$ in \eqref{e:G(beta)} exists, for all $\gamma>0$.

{\it (ii)} If $A_o F$ is of full column--rank, then the inverses
\begin{equation}\label{e:Sigma_GLS}
 \Sigma_{GLS}(\beta):= [(A_oF)^t G(\beta) A_oF]^{-1},
\end{equation}
and $[(A_oF)^tA_oF]^{-1}$ exist and are positive definite.
\end{proposition}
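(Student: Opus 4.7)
The key observation is that the assumption $F\beta > \vec 0$ (componentwise) makes $D := \mathrm{diag}(|F\beta|^{2\gamma})$ a diagonal matrix with strictly positive entries for every $\gamma>0$. Hence $D$ is symmetric positive definite and admits an invertible symmetric square root $D^{1/2}$. I would then write
$$
 A_o \, D \, A_o^t = (A_o D^{1/2})(A_o D^{1/2})^t,
$$
which is symmetric positive semidefinite. Since $D^{1/2}$ is invertible, right-multiplication by it preserves row rank, so $A_o D^{1/2}$ has the same row rank as $A_o$, namely $|{\cal O}|$ by the full row--rank hypothesis. Consequently $A_o D A_o^t$ is positive definite, hence invertible, establishing the existence of $G(\beta)$.

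\textbf{Plan for part (ii).} The invertibility and positive definiteness of $[(A_oF)^t A_oF]^{-1}$ follows from the standard Gram matrix fact: if $M$ has full column rank then $M^t M$ is positive definite, applied here with $M = A_oF$. For $\Sigma_{GLS}(\beta)$, the matrix $G(\beta)$ is the inverse of a symmetric positive definite matrix (by part (i)) and is therefore itself symmetric positive definite. I would factor $G(\beta) = R^t R$ via its symmetric positive square root, so that
$$
 (A_o F)^t G(\beta) (A_o F) = (R A_o F)^t (R A_o F).
$$
Since $R$ is invertible, $R A_o F$ has the same column rank as $A_oF$, which is full column rank $p$ by assumption; another application of the Gram matrix fact then gives that $(A_oF)^t G(\beta)(A_oF)$ is symmetric positive definite and hence invertible.

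\textbf{Main obstacle.} There is no deep analytical difficulty; the proof is a short linear algebra argument. The only subtle point to highlight is the role of the componentwise positivity assumption $F\beta>\vec 0$: it is exactly what guarantees $D$ is of full rank, so that multiplication by $D^{1/2}$ is rank-preserving. Without this assumption $D$ could be singular, the factorization $A_o D A_o^t = (A_o D^{1/2})(A_o D^{1/2})^t$ would no longer automatically yield a positive definite matrix, and the chain of implications would break down.
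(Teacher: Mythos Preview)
Your proposal is correct and follows essentially the same line as the paper's proof. The only cosmetic difference is that you invoke explicit square-root factorizations ($D^{1/2}$ and $G(\beta)=R^tR$) to reduce each step to a Gram-matrix argument, whereas the paper argues directly with quadratic forms (for part~(ii), it shows $x^t(A_oF)^tG(\beta)A_oFx=0 \Rightarrow A_oFx=0 \Rightarrow x=0$); the underlying linear-algebra content is identical.
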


\noi The proof is given in the Appendix and the assumptions are
discussed in the remarks below. 

Now, if the true parameter $\beta$ is known, then as shown in Chapter
3 of \cite{cressie:1993},  
the {\em minimum variance unbiased predictor} (MVUP) of $Y_u (t_0)$
given the data ${\cal D}$ is the standard  
kriging estimate 
$$
 \wtilde Y_u := A_u F\beta + \Sigma_{uo}(\beta)
 \Sigma_{oo}^{-1}(\beta) ( Y_o (t_0) - A_oF\beta) =: f(\beta,Y_o). 
$$
By \eqref{eqn:krig-mse} and \eqref{eq:CompMod}, the covariance matrix
of the prediction error  
$Y_u-\wtilde Y_u$ equals:
\begin{equation}\label{e:MSPE}
 {\rm m.s.e.}(\wtilde Y_u) = \sigma^2 {\Big(} \Sigma_{uu}(\beta) -
 \Sigma_{uo}(\beta) \Sigma_{oo}^{-1}(\beta) 
  \Sigma_{ou}(\beta) {\Big)},
\end{equation}
where 
\begin{equation}\label{e:Sigma-beta}
 \Sigma(\beta) = A {\rm diag}(|F\beta|^{2\gamma}) A^t = \left( \begin{array}{ll}
  \Sigma_{oo}(\beta) & \Sigma_{ou}(\beta)\\
   \Sigma_{uo}(\beta) & \Sigma_{uu} (\beta)
  \end{array}\right).
\end{equation}

In practice, consider the {\em plug--in predictor} $\what Y_u$,
where $\what \beta$ is some estimate  
of $\beta$.  Namely, 
$$
 \what Y_u := f(\what \beta, Y_o).
$$ 
Since the function $f$ is continuous, the consistency of $\what \beta$
would then imply that the plug--in predictor is a consistent estimator of the 
MVUP $\wtilde Y_u$.  The following result establishes the {\em strong}  
consistency of the iterated GLS estimators $\what \beta_k$'s, even in
the presence of long--range dependence. It also shows 
that the $\what \beta_k$'s are asymptotically equivalent to the
(unavailable) GLS estimator $\what \beta_{GLS}$, provided  
$k\ge 2$. 

\begin{theorem}\label{t:asymptotics}
Suppose that $A_o$ and $A_oF$ are of full row and column ranks, respectively, and let 
$F\beta>\vec 0$. Then:

 {\it (i)} For all $k\ge 1,$ we have $F\what \beta_k > \vec 0, a.s.$
 as $m\to\infty$. Hence, 
the estimates $\what \beta_k,\ k\ge 1$ are well--defined, almost surely, as $m\to\infty$.
 
{\it (ii)} If $\rho(\tau) \to 0,$ as $\tau\to\infty$, then for any
fixed $k\ge 1$, we have 
\begin{equation}\label{e:consistency}
 \what \beta_k \stackrel{a.s.}{\longrightarrow} \beta,\ \ \mbox{ as }m\to\infty. 
\end{equation}

{\it (iii)} For all $k\ge 2$, we have that,
$$
 \what \beta_k - \what \beta_{GLS} = o_P(\sigma_m),\ \mbox{ as } m\to\infty,
$$
where $\what \beta_{GLS}$ is the GLS estimate of $\beta$ 
in the model \eqref{eq:CompMod}, and $\sigma_m^2$ is given in \eqref{e:sigma-m}.
Moreover, ${\rm Var}(\what \beta_{GLS}) = \sigma_m^2
\Sigma_{GLS}(\beta)$ with $\Sigma_{GLS}$ as in 
\eqref{e:Sigma_GLS}.  
\end{theorem}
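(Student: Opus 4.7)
The plan is to reduce everything to a law of large numbers for the stationary Gaussian vector $\{Y_o(t)\}_t$, together with continuity of the GLS operator and the standard identity that any ``projection--type'' estimator reproduces the true parameter on the mean term.

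First, for part \emph{(ii)} and the well--definedness claim in \emph{(i)}, I would work with $\overline Y_o(t_0)$ as defined in \eqref{e:Yo-bar}. Since the flow--level traffic is stationary Gaussian and its correlation function satisfies $\rho(\tau)\to 0$, Maruyama's theorem implies that the stationary vector process $\{Y_o(t)\}_t$ is mixing and hence ergodic; Birkhoff's ergodic theorem then yields $\overline Y_o(t_0)\to \E Y_o(t_0) = A_oF\beta$ almost surely, as $m\to\infty$. Because $A_oF$ has full column rank, the OLS estimator is a continuous linear map of $\overline Y_o(t_0)$, so $\what\beta_1\to \beta$ a.s. The strict positivity $F\beta>\vec 0$ is an open condition, so by continuity $F\what\beta_1>\vec 0$ eventually, which by Proposition \ref{p:GLS} guarantees that $G(\what\beta_1)$ and $\Sigma_{GLS}(\what\beta_1)$ exist a.s. Induction then completes parts \emph{(i)} and \emph{(ii)}: assuming $\what\beta_k\to\beta$ a.s., the map $b\mapsto [(A_oF)^t G(b)A_oF]^{-1}(A_oF)^t G(b)\overline Y_o(t_0)$ is continuous at $\beta$ (whenever $Fb>\vec 0$) and evaluated at the limit it gives $\beta$ in the a.s.\ limit; hence $\what\beta_{k+1}\to\beta$ a.s., and strict positivity propagates.

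For part \emph{(iii)}, define the ``hat''--type operator $H(b):=\Sigma_{GLS}(b)(A_oF)^t G(b)$, so that $\what\beta_{k+1}=H(\what\beta_k)\overline Y_o(t_0)$ and $\what\beta_{GLS}=H(\beta)\overline Y_o(t_0)$. The key algebraic identity is $H(b)\, A_oF = I_p$ for every admissible $b$, which holds by direct expansion. Writing $\overline Y_o(t_0)=A_oF\beta+\epsilon_{\overline Y_o}(t_0)$ with $\epsilon_{\overline Y_o}(t_0)\sim\mathcal N(0,\sigma_m^2 G(\beta)^{-1})$ from \eqref{e:epsilon-Y-bar}, this identity collapses the mean part and gives
\begin{equation*}
\what\beta_{k+1}-\what\beta_{GLS}=\bigl[H(\what\beta_k)-H(\beta)\bigr]\epsilon_{\overline Y_o}(t_0).
\end{equation*}
Now $H$ is continuous on $\{b:Fb>\vec 0\}$, and by part \emph{(ii)} $\what\beta_k\to\beta$ a.s.\ for $k\ge 1$, so $\|H(\what\beta_k)-H(\beta)\|\to 0$ a.s. Since $\epsilon_{\overline Y_o}(t_0)$ is Gaussian with covariance of order $\sigma_m^2$, we have $\|\epsilon_{\overline Y_o}(t_0)\|=O_P(\sigma_m)$, and the product is $o_P(1)\cdot O_P(\sigma_m)=o_P(\sigma_m)$, which is exactly the claim for $k\ge 2$ (the induction hypothesis $\what\beta_k\to\beta$ being available from \emph{(ii)} already at $k=1$).

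Finally, the variance of $\what\beta_{GLS}$ is computed as $H(\beta)\mathrm{Var}(\epsilon_{\overline Y_o}(t_0))H(\beta)^t=\sigma_m^2\,H(\beta)G(\beta)^{-1}H(\beta)^t$, and a direct cancellation using $(A_oF)^t G(\beta)A_oF=\Sigma_{GLS}(\beta)^{-1}$ telescopes this to $\sigma_m^2\Sigma_{GLS}(\beta)$. I expect the main technical obstacle to be the a.s.\ LLN step: one has to justify invoking ergodicity/mixing for the multivariate stationary Gaussian sequence $\{Y_o(t)\}$ under the mild condition $\rho(\tau)\to 0$, and to note that this holds even in the long--range--dependent regime (where $\rho(\tau)$ decays slowly but still to zero). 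Everything else is continuity, the projection identity $H(b)A_oF=I_p$, and Gaussian tail bounds.
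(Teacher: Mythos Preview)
Your proposal is correct and follows essentially the same route as the paper: Maruyama's theorem for the a.s.\ law of large numbers on $\overline Y_o$, induction plus continuity for consistency, the operator identity $H(b)A_oF=I_p$ (the paper calls this operator $C$) to collapse the mean term in the difference $\what\beta_{k}-\what\beta_{GLS}$, and the $o_P(1)\cdot O_P(\sigma_m)$ factorization for part \emph{(iii)}. The variance computation for $\what\beta_{GLS}$ is likewise identical to the paper's.
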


\noi The proof is given in Appendix \ref{sec:proofs}. The asymptotic variance 
of $\what \beta_k,\ k\ge 2$ is discussed in the remarks below. 

As mentioned above, the scale parameter $\sigma$ of the covariance
structure in \eqref{eq:CompMod} is not involved in  
the formula for the predictors $\what Y_u = f(\what\beta_k, Y_o)$ (see
eg \eqref{e:beta-k}).  The parameter $\sigma$ is 
involved, however, in the expression of the prediction error
\eqref{e:MSPE}. Therefore, to gauge the accuracy of prediction, and to  
be able to use our estimators for detection of anomalies (see Section
\ref{s:adet} below), one needs an estimate of $\sigma$. 

As in the case of the ordinary kriging estimator (see
\eqref{e:sigma_X(t)} below), a natural estimate of $\sigma$ is  
obtained as follows:
\begin{equation}\label{e:sigma-hat}
\what\sigma^2 := { \mbox{vec}(\what \Sigma_{Y_o})^t \mbox{vec}(
  \Sigma_{oo}(\what\beta))/ \mbox{vec}(\Sigma_{oo}(\what \beta))^t
  \mbox{vec}(  
\Sigma_{oo}(\what\beta))},
\end{equation}
where $\Sigma_{oo}(\beta)$ is as in \eqref{e:Sigma-beta}, $\what
\Sigma_{Y_o} = \what \Sigma_{Y_o}(t_0)$ is the sample covariance matrix of 
the vector $Y_o$, calculated from past $m$ observations $\{Y_o(t_0-k),\ 0\le k\le m-1\}$,
and $\what\beta$ is an estimate of $\beta$. 

\begin{proposition}\label{p:sigma-consistency}
Assume the conditions of Theorem \ref{t:asymptotics}{\it (ii)}. Then,
with $\what\beta = \beta_k,\ k\ge 1$, for $\what\sigma^2$ as  
in \eqref{e:sigma-hat}, we have $\what \sigma^2\stackrel{a.s.}{\to}
\sigma^2$, as $m\to\infty$. 
\end{proposition}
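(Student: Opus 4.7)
The plan is to express $\what\sigma^2$ as a ratio of two quantities that each converge almost surely, and then to conclude by the continuous mapping theorem. The two ingredients are: (a) almost--sure convergence $\Sigma_{oo}(\what\beta_k)\to\Sigma_{oo}(\beta)$, which follows directly from the consistency of $\what\beta_k$ established in Theorem \ref{t:asymptotics}\emph{(ii)}; and (b) almost--sure convergence of the sample covariance matrix $\what\Sigma_{Y_o}(t_0)$ to $\sigma^2\Sigma_{oo}(\beta)$, the true covariance of $Y_o(t)$ under the joint model \eqref{eq:CompMod}.

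Ingredient (a) is immediate. The map $\beta\mapsto \Sigma_{oo}(\beta)=A_o\,{\rm diag}(|F\beta|^{2\gamma})A_o^t$ is continuous at the true value, where $F\beta>\vec 0$, and Theorem \ref{t:asymptotics}\emph{(ii)} gives $\what\beta_k\to\beta$ a.s. Hence $\mbox{vec}(\Sigma_{oo}(\what\beta_k))\to\mbox{vec}(\Sigma_{oo}(\beta))$ a.s., and in particular the denominator of \eqref{e:sigma-hat} converges a.s.\ to the strictly positive scalar $\|\mbox{vec}(\Sigma_{oo}(\beta))\|^2$.

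For ingredient (b), decompose $Y_o(t)=A_oF\beta+\xi(t)$, where $\xi(t):=\sigma A_o\,{\rm diag}(|F\beta|^\gamma)Z(t)$ is a centered stationary Gaussian vector process whose lag-zero covariance is $\sigma^2\Sigma_{oo}(\beta)$ and whose componentwise temporal correlation is $\rho(i)$. Under the standing assumption $\rho(\tau)\to 0$, Maruyama's classical theorem on stationary Gaussian processes shows that $\{\xi(t)\}$ is ergodic (in fact mixing). Applying the pointwise ergodic theorem componentwise yields $\overline Y_o(t_0)\to A_oF\beta$ and
\begin{equation*}
\frac{1}{m}\sum_{k=0}^{m-1} Y_o(t_0-k)Y_o(t_0-k)^t \;\stackrel{a.s.}{\longrightarrow}\; \sigma^2\Sigma_{oo}(\beta) + (A_oF\beta)(A_oF\beta)^t,
\end{equation*}
as $m\to\infty$. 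Subtracting $\overline Y_o(t_0)\overline Y_o(t_0)^t$ (whose a.s.\ limit is $(A_oF\beta)(A_oF\beta)^t$) gives $\what\Sigma_{Y_o}(t_0)\to\sigma^2\Sigma_{oo}(\beta)$ a.s. Combined with (a), the numerator of \eqref{e:sigma-hat} converges a.s.\ to $\sigma^2\|\mbox{vec}(\Sigma_{oo}(\beta))\|^2$, and the ratio to $\sigma^2$, as claimed.

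The main delicate point is step (b): since the flow--level traffic is explicitly allowed to exhibit long--range dependence (the only assumption on $\rho$ is the mild decay $\rho(\tau)\to 0$, with no summability), one cannot appeal to a standard $L^2$ law of large numbers for weakly dependent sequences. The Maruyama characterization of Gaussian ergodicity is what delivers a.s.\ (rather than merely $L^2$ or in--probability) convergence of both the sample mean and the sample covariance from exactly the minimal hypothesis already used in Theorem \ref{t:asymptotics}\emph{(ii)}.
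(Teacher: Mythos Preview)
Your argument is correct and follows essentially the same route as the paper: invoke Maruyama's theorem (Gaussian process with vanishing correlations is mixing, hence ergodic) to obtain $\what\Sigma_{Y_o}\stackrel{a.s.}{\to}\sigma^2\Sigma_{oo}(\beta)$, combine with $\what\beta_k\stackrel{a.s.}{\to}\beta$ from Theorem \ref{t:asymptotics}\emph{(ii)}, and conclude by continuity of the ratio in \eqref{e:sigma-hat}. The paper's proof is a two-line sketch of precisely this; your version is simply more explicit about the second-moment ergodic computation and the positivity of the denominator.
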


\noindent The proof is given in the Appendix. 

In the following section, we will address the estimation, validation
and the applications of the complete model \eqref{eq:CompMod} 
by using real and simulated data.  We conclude this section with a few
technical comments. 

\medskip
\noindent{\bf Remarks:} 
\begin{enumerate}
\item
The model in \eqref{eq:CompMod} is realistic only if $F\beta>\vec 0$.  In our
experience, the estimates $\what \beta_k$ obtained from 
real network data always satisfy $F\what \beta_k> \vec 0$. This is perhaps due to 
the careful (optimal) choice of the matrix $F$ discussed in Section \ref{sec:means}.

\item The assumption that $A_o$ is of full row--rank is natural since
for prediction purposes, one need not include in the set of observed links 
ones that are perfect linear combination of other observed links. In practice, 
such a redundant scenario can arise only in the trivial case when some nodes do 
not generate traffic.

\item The full column--rank condition on $A_oF$ is required for the identifiability of $\beta$.
If the dimension of $\beta$ is greater than the number of observed links, then the 
model parameters cannot be identified (see also Section \ref{sec:cal} and Figure 
\ref{fig:gpr}). In practice, we implement the iGLS procedure by using the 
Moore--Penroze generalized inverse.

\item Under the assumptions of Theorem \ref{t:asymptotics}, one can show that
for all $k\ge 2$,  
\begin{equation}\label{e:MSE-beta}
 \E [(\what \beta_{k} -\beta)(\what \beta_{k} - \beta)^t 1_{K}(\what \beta_{k-1})]
  = \sigma_{m}^2 \Sigma_{GLS}(\beta) + o(\sigma_m^2),
\end{equation}
as $m\to\infty$, where $K\subset \mathbb R^p$ is an arbitrary compact set containing $\beta$ 
in its interior and such that $F\wtilde \beta> \vec 0,\ \forall \wtilde\beta\in K$.
Thus, the variance of the estimators $\what\beta_k,\ k\ge 2$ obtained in 
practice is essentially asymptotically optimal.

\end{enumerate}

\section{Model Validation and Calibration}
In this section, we evaluate our model in the context of traffic
prediction. We focus on 12 representative scenarios described in Tables 
\ref{tab:I2links}, \ref{tab:cases} and the Appendix  \ref{ap:i2desc}, below.

\subsection{Performance}
Tables \ref{tab:okrig} and \ref{tab:fullmodel} provide ReMSE's \eqref{e:ReMSE} for 
the optimal baseline estimator, the ordinary kriging estimator, and our 
network--specific model (with $p=2$), respectively.

Scenarios 1--9 represent situations where the observed links share 
sufficiently many flows with the unobserved ones and thus there is enough 
information for relatively accurate prediction.  In {\em all} these cases, the
{\em network--specific model} outperforms the more naive
{\em ordinary kriging} method, with an average improvement of the ReMSE
by $0.1887$ points or $18.87 \%$. The difference is as high as $78\%$
and as low as $0.8 \%$ in favor of the network specific model. In most cases and across 
different days our model yields useful predictions with ReMSE's of about $5\%$. Note that
the optimal ReMSE's (Table \ref{tab:okrig}, baseline) are about $2.5\%$ in these scenarios.

In Scenarios 10--12, however, fewer flows are shared by the observed and
unobserved links and hence the accurate prediction is objectively more difficult.
Note that the baseline predictor (Table \ref{tab:okrig}) has over twice the 
ReMSE's in these cases as compared to cases 1--9.  In Scenarios 11 and 12, the ordinary 
kriging estimator outperforms the network specific model with differences in 
the ReMSE's between $19\%$ and $105\%$. The ReMSE's of the ordinary kriging estimator, however,
are greater than $21.32\%$. In Scenario 10, our model is superior to ordinary kriging 
for all five days, but the large ReMSE's indicate that neither approach is particularly 
useful in this case.

This initial comparison shows that the network--specific model improves significantly upon the 
naive ordinary kriging approach and comes close to the optimal ReMSE lower bound.  This is so in
the cases where the prediction problem is well--posed. Scenarios 10--12 illustrate that the accuracy of
prediction has natural limitations, inherent to the routing of the network, that none of the 
two models can overcome.

\subsection{Robustness over Time}

One apparent limitation of the network specific approach is that it 
relies on expensive flow--level data ($X_j$'s) to build the matrix $F$. 
Surprisingly, it turns out that once the matrix $F$ is obtained from flow--level 
measurements during a single day, it can be successfully used to model the link--level
traffic for many days in the future.  That is, even though the model requires the extensive 
off--line analysis of NetFlow data, once it is built, it can be readily estimated
{\em on--line} using only link--level data and used for several days before it has 
to be updated. It is remarkable that all results in Table \ref{tab:fullmodel} are based on 
a model (i.e.\ a matrix $F$) learned from Feb 19, 2009 flow--level data.  Then, the
{\em same} model was used to predict $Y_\ell$'s in all 12 scenarios for 5 different 
days. Even a month later, this model continues to outperform the ordinary kriging  
in the first nine scenarios.  Figure \ref{fig:roled} shows also that in only one of 
the $6$ scenarios therein we have an appreciable increase in the prediction ReMSE's due perhaps  
to an outdated model. These results may be attributed to the fact that the structure of the traffic means
across all flows in the network, although complex, is relatively constant, and is therefore
well--captured by the principal components involved in the matrix $F$. The model must be updated
should structural changes in the network occur.

\begin{table}
\begin{center}
\begin{tabular}{r|rrrrr}
  \hline
Scenario  & 2009-02-19 & 2009-02-18 & 2009-02-20 & 2009-02-26 & 2009-03-12 \\ 
  \hline
1 & 0.2476 & 0.2342 & 0.2363 & 0.2629 & 0.2209 \\ 
  2 & 0.0517 & 0.0461 & 0.0550 & 0.0424 & 0.0746 \\ 
  3 & 0.0514 & 0.0459 & 0.0549 & 0.0425 & 0.0750 \\ 
  4 & 0.0521 & 0.0465 & 0.0552 & 0.0427 & 0.0740 \\ 
\hline
  5 & 0.0512 & 0.0696 & 0.0658 & 0.0694 & 0.0596 \\ 
  6 & 0.2414 & 0.2651 & 0.2864 & 0.3344 & 0.2722 \\ 
  7 & 0.0468 & 0.0619 & 0.0587 & 0.0684 & 0.0462 \\ 
  8 & 0.0388 & 0.0501 & 0.0495 & 0.0564 & 0.0384 \\ 
  9 & 0.0395 & 0.0510 & 0.0504 & 0.0567 & 0.0398 \\ 
\hline
  10 & 3.6668 & 3.9110 & 3.8143 & 5.0877 & 5.5161 \\ 
  11 & 1.0322 & 1.1060 & 1.0687 & 1.4335 & 1.7803 \\ 
  12 & 0.6277 & 0.7618 & 0.6875 & 0.7792 & 0.7449 \\ 
   \hline
\end{tabular}
\end{center}
\caption{ReMSE's of the network--specific model. The matrix
  $F$ was obtained from Feb 19, 2009 NetFlow data ($X_j$'s), and 
  then used to fit the model and perform prediction based on 
  link data ($Y_l$'s) for five different days.}
\label{tab:fullmodel}
\end{table}

\begin{figure}
\centering
\includegraphics[width=.5\textwidth]{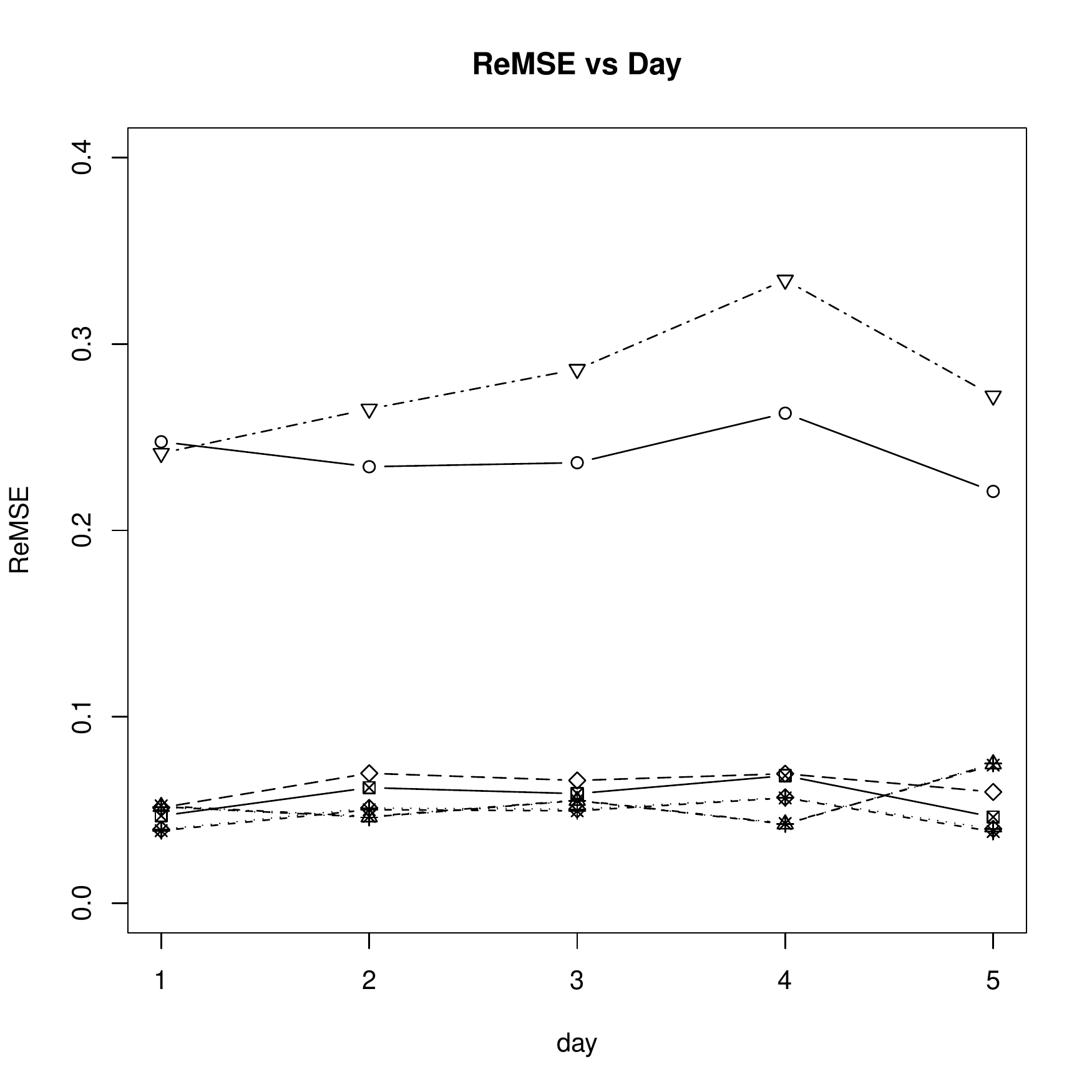}
\caption{ReMSE of network--specific model over time.  The model was
  learned on Feb 19, 2009
  (1). The matrix is then used to predict the previous day (2), the
  next day (3), a day one week later (4), and a day 4 weeks later
  (5).  Each line corresponds to one of the first six scenarios described in
Table \ref{tab:cases}.}
\label{fig:roled}
\end{figure}

\subsection{Calibration}\label{sec:cal} Applying the model to real data relies on 
the choice of several parameters, such as $p$, $\gamma$, and $m$, as described in Section
\ref{sec:network-model}.  The prediction performance is remarkably 
robust to the choice of these parameters, as discussed in detail below.

\smallskip
\noindent $\bullet$ {\it The role of $ \gamma$}: This parameter controls the 
mean/variance relationship in the model (see \eqref{eq:Fmod} and
\eqref{eq:Fcov}). We observed the relationship \eqref{e:Sj_vs_Xj}, between the 
sample means $\overline X_j(t)$ and standard deviations $S_j(t)$'s obtained 
from windows of the flow--level data. The parameter $\gamma$ was estimated by using a 
log--linear regression of $\overline X_j(t)$ versus $S_j(t)$, over 
$j,\ 1\le j\le {\cal J}$. This was done for a range of window sizes and
times $t$, and the estimates were found to be stable and $\what \gamma
\approx 3/4$, as can be clearly seen in Figure \ref{fig:gamest}.
Independently, we explored the sensitivity of
the model to the choice of $\gamma$ and found that the ReMSE's are robust to all choices  
$\gamma \in [0.5, 2]$. See Figure \ref{fig:rolegam} the effect of the
choice of $\gamma$ for several of the prediction scenarios.   Small
values of $\gamma\approx 0.5$ lead generally to slightly  
better ReMSE as compared to larger $\gamma$'s. This may be due to the fact that the
small powers $\gamma$ lead to a 'smoother' covariance matrix and hence have a regularizing effect.
In practice, however, we need not only accurate prediction but also adequate models for the variance,
in order to have reliable estimates of the prediction error. Therefore, we recommend using 
$\gamma\approx 3/4$ as inferred from the data.

\begin{figure}[t!]
\centering
\subfloat[Estimating $\gamma$ \label{fig:gamest}]{\includegraphics[width=.46\textwidth]{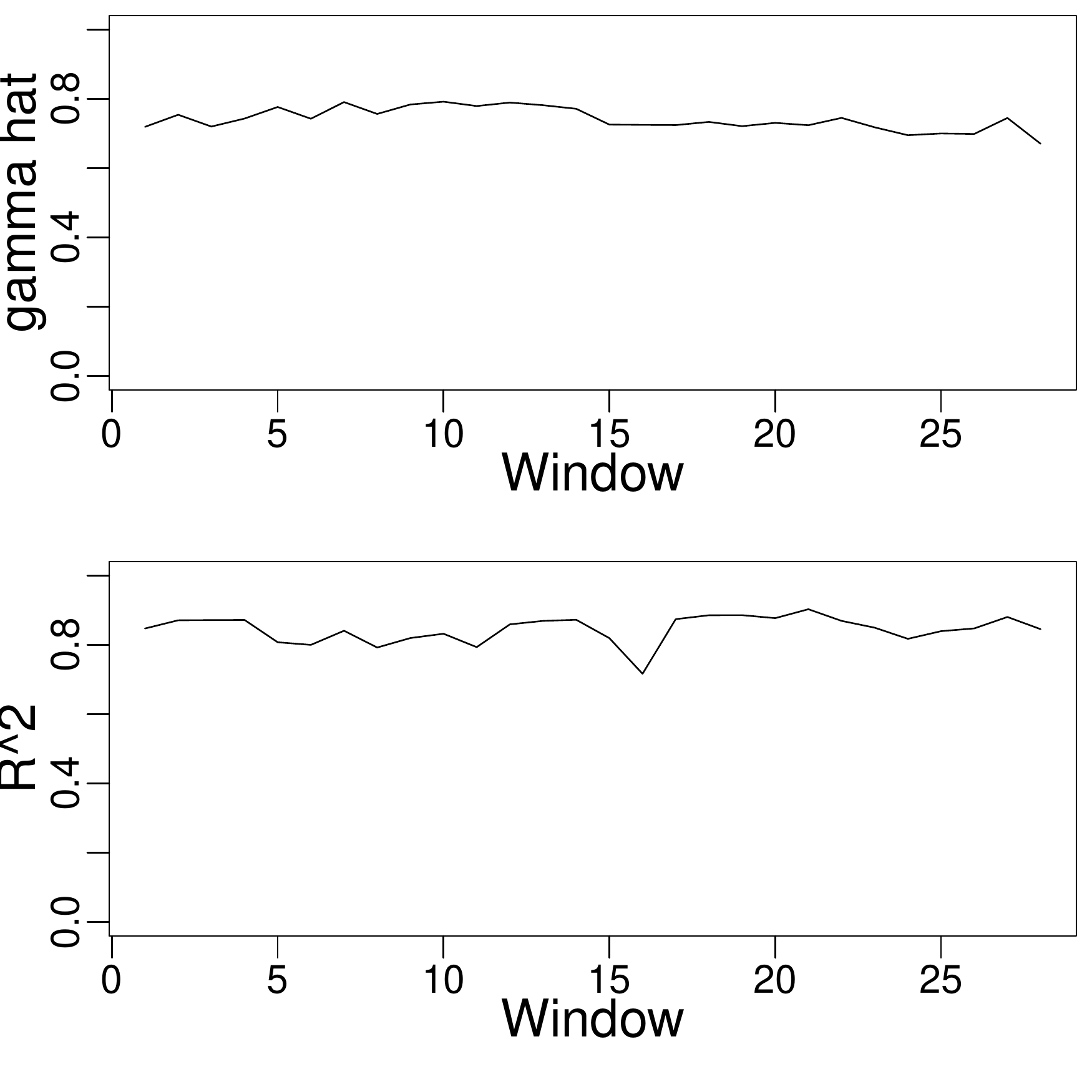}}
\subfloat[Effect of $\gamma$ on
Prediction \label{fig:rolegam}]{\includegraphics[width=.46\textwidth]{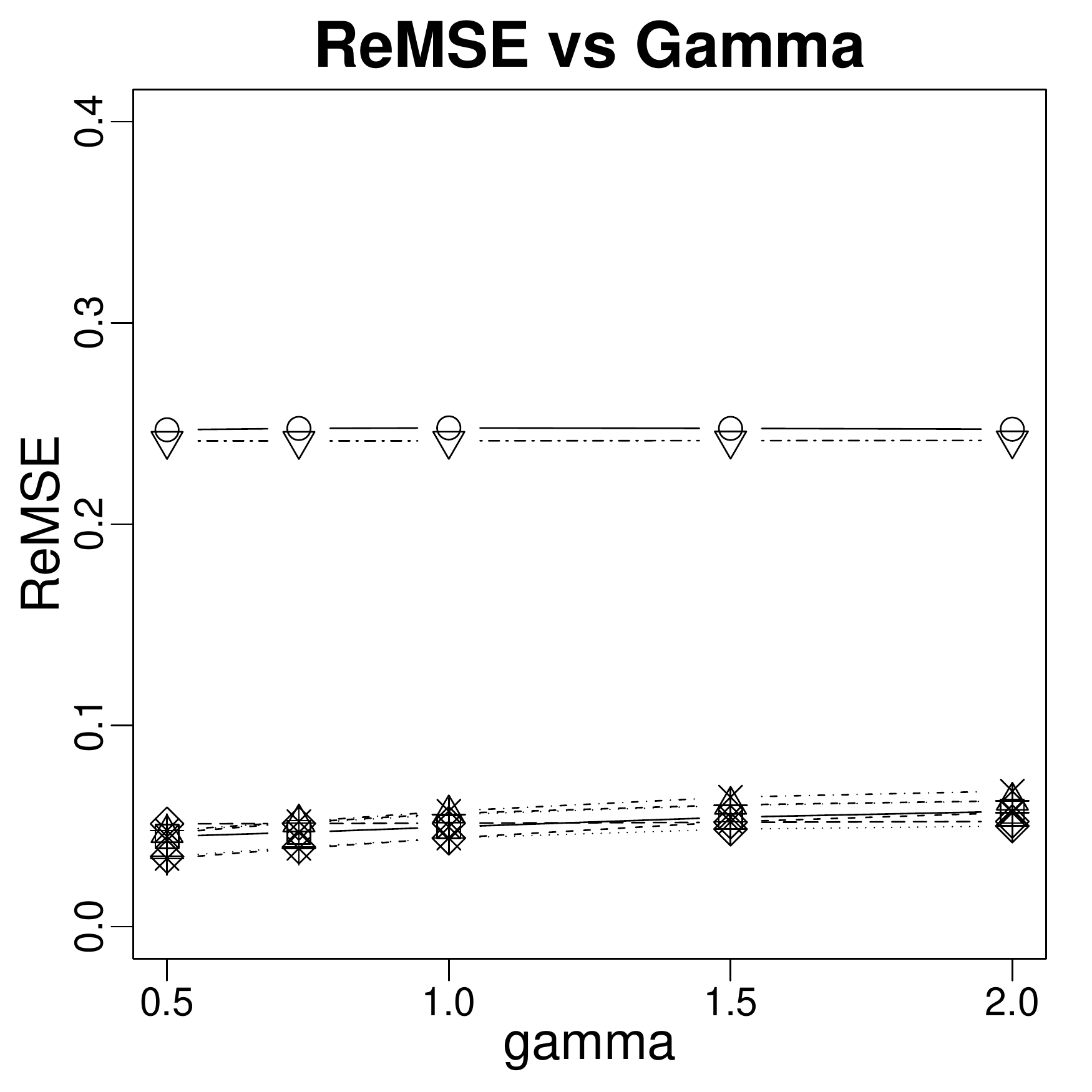}} 
\caption{{\it Left:} Estimates of
  $\gamma$ over windows of the data (top) and their $R^2$ values (bottom), obtained from 
  log--linear regression of the sample standard deviations against the means across 
  all source--destination routes. {\it Right:} ReMSE of scenarios 1-9
  for choices of $\gamma$.  Note  that the choice of this parameter
  does not significantly alter the performance of the predictor for
  the values tested.} 
\end{figure}

\smallskip
\noindent $\bullet$ {\it The role of p}: The parameter $p$ equals 
the number of principal components (columns of the matrix $F$)
used to model the traffic means in \eqref{eq:Fmod}. The prediction 
performance is robust to the choice of $p$, provided that $p$ is less than 
the number of {\em observed links} used in prediction. Figure \ref{fig:gpr} 
shows the ReMSE's for 3 prediction scenarios as a function of $p$.  
In Scenarios 6, 7, and 8 the {\em same} link is predicted via two, three, and 
seven other links, respectively (see Table \ref{tab:cases}).  
If $p$ exceeds the number of observed links, then
the parameter $\beta$ in \eqref{eq:CompMod} is not identifiable, potentially resulting 
in poor performance. This explains the peaks in the ReMSE's at $p=2,\ 4,$ and $7$ in 
Scenarios 6--8. Surprisingly, in the first two cases the ReMSE's recover as $p$ grows, 
even in the presence of non--identifiability. Similar patterns are seen in the other 9
prediction scenarios (omitted, for simplicity). The performance of the model 
remains stable for all choices of $p$ less than the number of predictors.

In light of these results, we advocate using a relatively small value
of $p$, such as $2$. While a larger value of $p$ can slightly improve prediction errors 
when many links are observed, having small value of $p$ allows one to fit the model in a 
wide variety of prediction scenarios, without sacrificing the overall performance. 
Recall also Figure \ref{fig:rolep}.

\begin{figure}[t!]
\centering
\subfloat[Role of $p$ \label{fig:gpr}]
{\includegraphics[width=.46\textwidth]{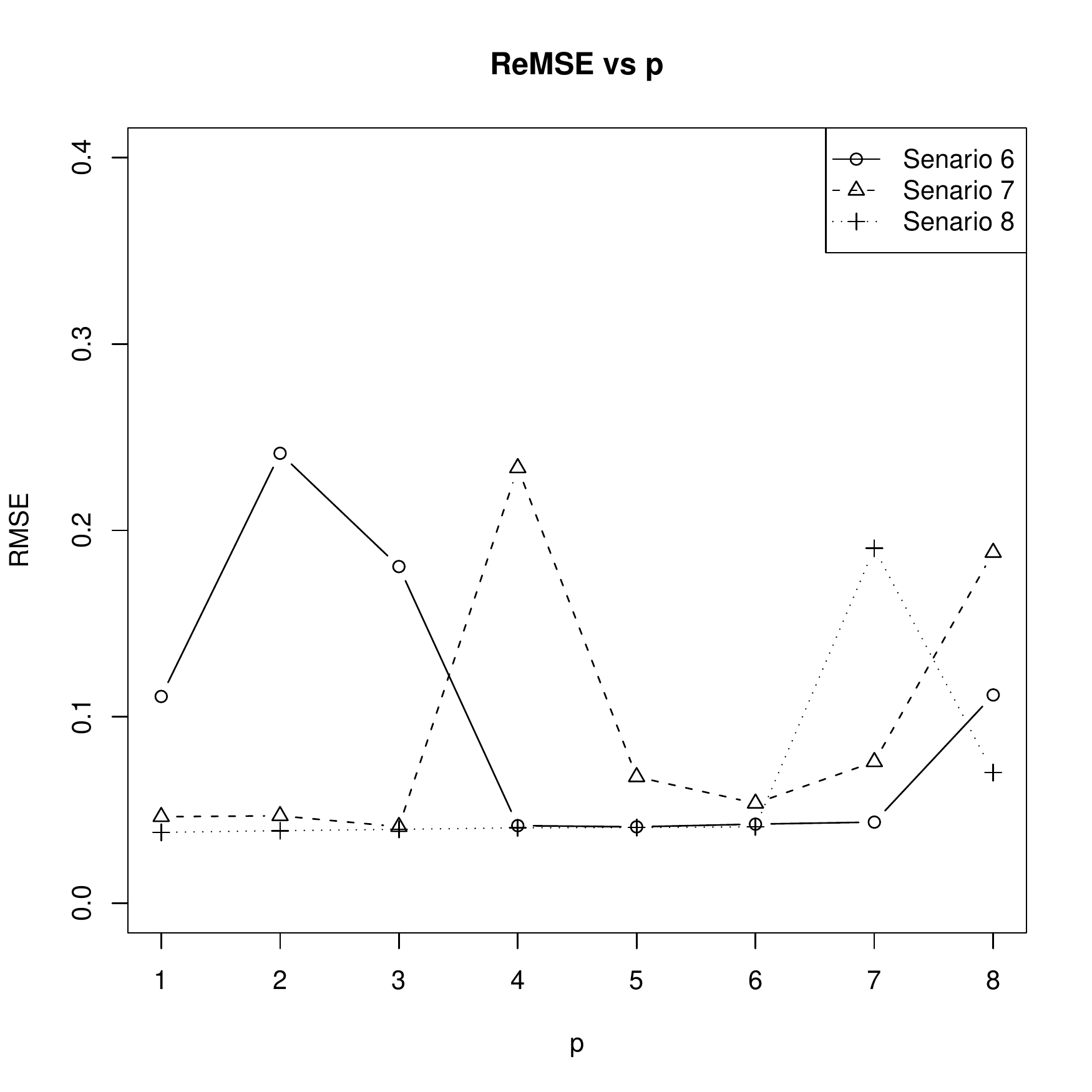}}
\subfloat[Mean Window \label{fig:rolemw}]{\includegraphics[width=.46\textwidth]{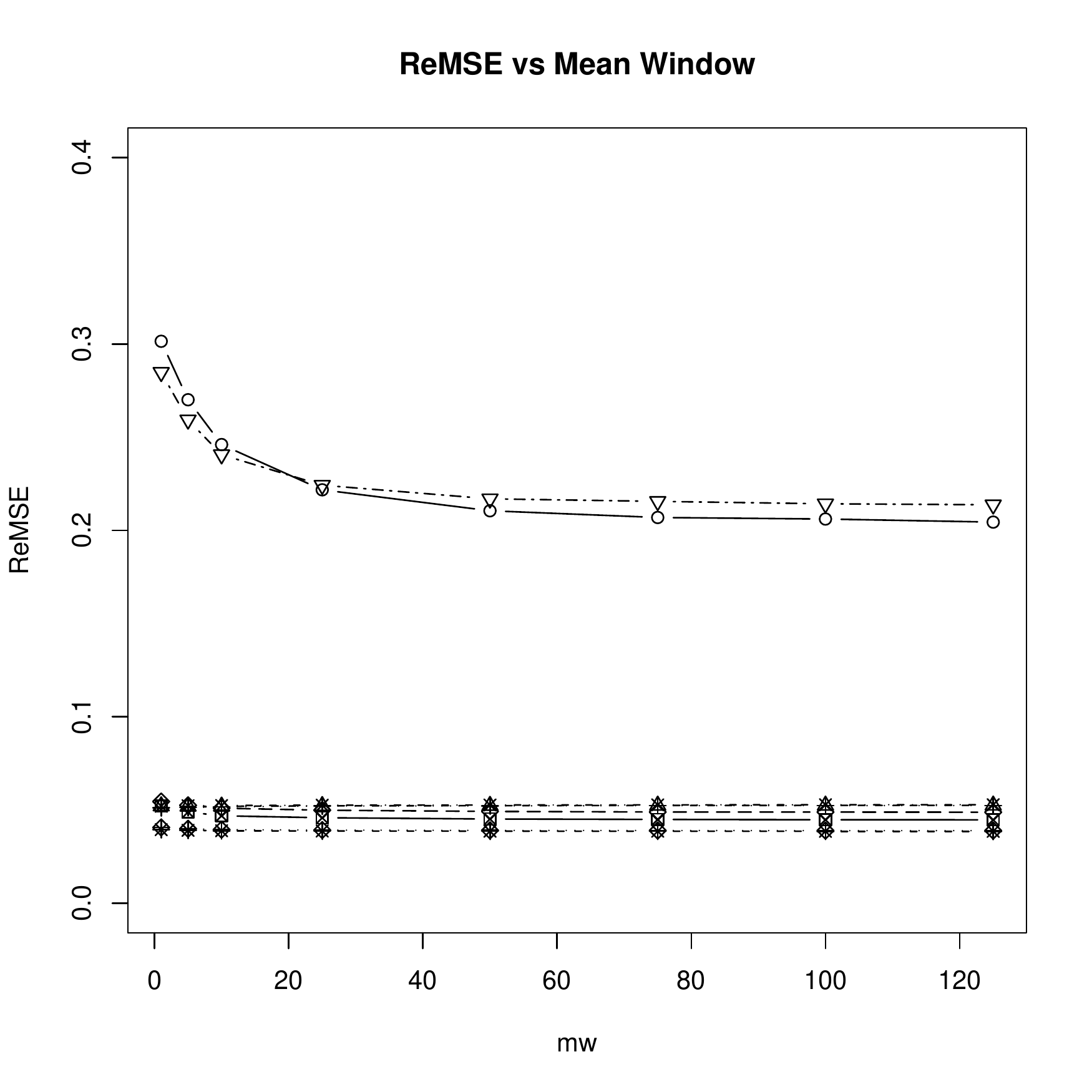}} 
\caption{{\it Left:}  ReMSE of scenarios 6, 7, and 8 as a function of
  $p$.  Performance suffers when $p$ exceeds the number of observed
  links, but is otherwise robust to the choice of $p$.  
{\it Right:} ReMSE of scenarios 1--9, as a function of window size $m$ used to
  estimate the means. We used $p=2$ in these cases.
}

\end{figure}

\medskip
\noindent $\bullet$ {\it The role of the window size $m$:}
In practice, at each time point $t$, the model \eqref{eq:CompMod} is 
estimated from a window of past $m$ data $\{Y_o(t-k),\ 0\le k \le m-1\}$.
Namely, $\beta$ is obtained by using the iGLS algorithm and $\sigma$ from 
\eqref{e:sigma-hat} (see Section \ref{s:joint_model}). Figure \ref{fig:rolemw} illustrates
the effect of the window size $m$ on the quality of  prediction. Note that in all scenarios
therein the prediction performance is rather robust to the choice of $m$, provided 
that $m \ge 10$. It is remarkable that with the exception of 2 out of the 9 shown prediction 
scenarios the model works well even when $m$ is less than $10$.

Recall that the scalar $\sigma$ does not affect the prediction and the ReMSE's in 
Figure \ref{fig:rolemw} depend only on the quality of estimation of $\beta$. 
The parameter $\sigma$ is involved in the prediction error. Our experiments with simulated data
(not shown here, for simplicity) show that the estimates of $\sigma$ are also robust to the
window size $m$.

\medskip
\noindent $\bullet$ {\it Convergence of $\widehat \beta_{iGLS}$}:  In practice, the estimates
of $\widehat \beta_{GLS}$ stabilize after a few iterations.  Here we assume
the convergence criterion $\| \widehat \beta_k-\widehat \beta_{k+1} \| <
\epsilon$, with $\epsilon=0.001$, for example.  In the
figures and tables, however, the algorithm was 
allowed to run for at least 20 iterations to be conservative.

\subsection{Model Misspecification}

Here, we apply our model \eqref{eq:CompMod} to simulated data that violates 
the assumption of stationarity. Our goal is to understand the limitations of model, when 
applied to network traffic with slowly changing trend. Network flows are simulated 
using independent fractional Gaussian noise (fGn) time series with self--similarity parameter
$H=0.8$ (see Section \ref{s:fGn}, below). This value of $H$ is typical for several real network 
flows that we examined.  

We compare the mean squared prediction errors in the stationary and non--stationary regimes.
In the stationary case, constant means are added to the simulated fGn's to produce realistic
traffic flows with constant positive means. Non--stationary traffic traces were obtained
by adding a sinusoidal trend to all simulated stationary flows. In both cases (stationary and non--
stationary), link--level data was obtained from the simulated flow--level data through the
routing equation \eqref{routing.eqn1}.

We focused on prediction Scenario 8
(see Table \ref{tab:cases}). We computed the baseline `simple kriging' predictor $\wtilde Y$ 
by using the known means and covariances of the simulated data. We also estimated our model
and used it to obtain a predictor $\widehat Y$ of the unobserved link. 

Table \ref{tab:miss} shows the resulting prediction errors as a
function of the window size used to estimate the parameter $\beta$. 

The empirical error of our estimators are comparable to the optimal MSE's 
for the baseline estimator. This is so even in the presence of non--stationarity. 
The major exception is when in the non--stationary case the window size  
becomes close to the half--period (50) of the sinusoidal trend. This limited experiment 
shows that our model adapts well and it is essentially robust to non--stationary trends provided
that relatively small window sizes $m$ are used. 

\begin{table}[ht]
\begin{center}
\begin{tabular}{r|c|rrrrrrrr}
  \hline
&  Baseline &
\multicolumn{8}{c}{Network Specific Model} \\
Window &  & 5 & 10 & 25 & 30 & 50 & 75 & 100 & 200 \\ 
\hline
Stationary& 2.61 & 2.93 & 2.88 & 2.83 & 2.82 & 2.80 & 2.78 & 2.77 & 2.75 \\ 
Non--Stationary& 2.61 & 4.15 & 4.15 & 4.50 & 4.71 & 5.42 & 6.19 & 4.99 & 4.98 \\ 
   \hline

\end{tabular}
\end{center}
\caption{Empirical mean squared errors for the baseline (simple kriging) predictor $\widetilde Y$
  and the predictor $\widehat Y$, based on our network--specific model with $p=2$.
  Time series of 20,000 observations were used.}
\label{tab:miss}
\end{table}

\section{Statistical Detection of Anomalies}
\label{s:adet}

In this section, we present an application of the above 
methodology to the case when all links on the network are observed.
In this case, for each link $\ell$, one can compare the observed
$Y_\ell(t)$ and the predicted  
$\widehat Y_\ell(t)$ traffic.  If statistically significant deviations 
are encountered, then this can serve as a flag of an anomaly or some structural
change in the network traffic.  

To illustrate and detect such differences, we use a modified exponentially weighted
moving average (EWMA) control chart on the differences $Y_{\ell}-\widehat Y_{\ell}$.
The latter have zero means and variances equal to the prediction error, which can be estimated
from \eqref{e:MSPE} and \eqref{e:sigma-hat}.

Although EWMA control charts are widely used and well--studied (see,
e.g, \cite{box:luceno:2009}), they rely on an assumption of independent or weakly dependent (in $t$)
observations. Computer network traffic is long--range dependent (LRD) and the 
usual variance formula used in the EWMA charts does not apply.  We show next how these charts 
can be adjusted to account for the presence such dependence.

\subsection{Control charts for long--range dependent data}  \label{s:fGn}
Consider the EWMA with discount factor $\phi\in(0,1)$ of the time series $\{Z_k\}_{k\geq0}$:
\begin{equation}
\widetilde Z_t:= (1-\phi)(Z_t+ \phi Z_{t-1}+\phi^2 Z_{t-2}+\cdots)
\label{eqn:ewma}
\end{equation}
Letting $\lambda= 1-\phi$, this moving average may be efficiently updated  via
$\widetilde Z_t=\lambda Z_t+(1-\lambda) \widetilde Z_{t-1}$.  
For independent $Z_t$'s, for ${\rm Var}(\widetilde Z_t) = \sigma_{\widetilde Z}^2$, we have
$ \sigma_{\widetilde Z}^2=\frac{\lambda}{2-\lambda}\sigma_Z^2.$
When the $Z_t$'s are long--range dependent, however, the latter
formula underestimates the  variance $\sigma_{\widetilde Z}^2$, which
can lead to frequent false positive alarms. 

Internet traffic traces are well known to exhibit long--range
dependence, which can be well--modeled by using fractional Gaussian noise (fGn) (see, eg
\cite{willinger:paxson:riedi:taqqu:2003-livre}). Recall that the 
zero mean Gaussian time series $\{Z_k\}$ is said to be fGn if
$Z_k=B_H(k)-B_H(k-1), k\geq 1$, where $\{B_H(t)\}_{t\geq 0}$ is a
fractional  Brownian motion with self--similarity parameter $H\in
(0,1)$.  The process $\{B_H(t)\}_{t\geq 0}$ is  self--similar and has
stationary increments. Its covariance function is 
\begin{equation*}
\E B_H(t)B_H(s)=\frac{\sigma^2}{2}{\Big(}|t|^{2H}+|s|^{2H}-|t-s|^{2H}{\Big)},\  t,s>0,
\end{equation*}
where $\sigma^2=\E B_H(1)^2=\mbox{Var}(B_H(1))$. Thus, the fGn
$\{Z_k\}$ is a stationary time series with auto--covariance 
\begin{equation}
\gamma_H(k)=\E(Z_kZ_0)=\frac{\sigma^2}{2} {\Big(} |k+1|^{2H}+|k-1|^{2H}-2k^{2H} {\Big)}.
\end{equation}
For more details, see, eg \cite{taqqu:2003-livre}. 

The following result provides an expression for the variance
$\sigma_{\widetilde Z}^2$ of the EWMA control chart corresponding to
LRD fGn data. 

\begin{proposition} \label{prop:fGnchart} For $\widetilde Z_t$ as in
  \eqref{eqn:ewma} with $Z_t$'s an fGn with self--similarity parameter
  $H\in(0,1)$ and variance $\sigma^2$, we have:  
\begin{equation}
\var (\widetilde Z_t)=  \frac{\lambda^2 \sigma^2}{C_2(H)^2}
\int_{-\infty}^{\infty}\frac{2(1-\cos(\theta))|\theta|^{-2H-1}} 
{\lambda^2-2\lambda(1-\cos\theta)+2(1- \cos \theta)} d \theta,
\label{eqn:ccfix}
\end{equation}
where $C_2(H)^2:=\pi/(H\Gamma(2H)\sin(H\pi))$.
\end{proposition}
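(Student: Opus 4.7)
The plan is to represent $\widetilde Z_t$ as a single stochastic integral via the harmonizable representation of fractional Brownian motion, and then read off $\var(\widetilde Z_t)$ from the It\^o isometry. First, I would rewrite \eqref{eqn:ewma} as the causal linear filter $\widetilde Z_t = \sum_{k\ge 0} a_k Z_{t-k}$ with weights $a_k=\lambda\phi^k$. Since $\phi\in(0,1)$, $\sum_k a_k^2<\infty$ and the series converges in $L^2$ to a stationary Gaussian process. Next, I would invoke the harmonizable representation
$$
B_H(t)=\frac{\sigma}{C_2(H)}\int_{\bbR}\frac{e^{it\theta}-1}{i\theta}\,|\theta|^{1/2-H}\,\widetilde W(d\theta),
$$
where $\widetilde W$ is a complex Hermitian Gaussian random measure on $\bbR$ with Lebesgue control measure, and the constant $C_2(H)^2=\pi/(H\Gamma(2H)\sin(H\pi))$ is precisely the normalization making $\var(B_H(1))=\sigma^2$. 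Taking unit increments then yields
$$
Z_k=\frac{\sigma}{C_2(H)}\int_{\bbR} e^{ik\theta}\,\frac{1-e^{-i\theta}}{i\theta}\,|\theta|^{1/2-H}\,\widetilde W(d\theta).
$$

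Second, I would push the EWMA filter through the stochastic integral. The transfer function is
$$
T(\theta):=\lambda\sum_{k\ge 0}(\phi e^{-i\theta})^k=\frac{\lambda}{1-\phi e^{-i\theta}},
$$
continuous and uniformly bounded on $\bbR$ by $\lambda/(1-\phi)=1$. Because each truncated partial sum is bounded by $1$ and converges pointwise to $T(\theta)$, dominated convergence in the spectral $L^2$ space justifies the interchange, so
$$
\widetilde Z_t=\frac{\sigma}{C_2(H)}\int_{\bbR} e^{it\theta}\,\frac{\lambda}{1-\phi e^{-i\theta}}\,\frac{1-e^{-i\theta}}{i\theta}\,|\theta|^{1/2-H}\,\widetilde W(d\theta).
$$
Applying the It\^o isometry gives
$$
\var(\widetilde Z_t)=\frac{\sigma^2\lambda^2}{C_2(H)^2}\int_{\bbR}\frac{|1-e^{-i\theta}|^2}{|1-\phi e^{-i\theta}|^2}\cdot\frac{|\theta|^{1-2H}}{\theta^2}\,d\theta.
$$

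Third, I would simplify the integrand. Using $|1-e^{-i\theta}|^2=2(1-\cos\theta)$ and $|\theta|^{1-2H}/\theta^2=|\theta|^{-2H-1}$, and expanding, with $\phi=1-\lambda$,
$$
|1-\phi e^{-i\theta}|^2=1-2\phi\cos\theta+\phi^2=\lambda^2-2\lambda(1-\cos\theta)+2(1-\cos\theta),
$$
the right--hand side collapses to exactly \eqref{eqn:ccfix}.

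The only real obstacle is the spectral interchange, which is however routine: the spectral density of $Z$ is $2(1-\cos\theta)|\theta|^{-2H-1}/C_2(H)^2$, integrable over $\bbR$ since near the origin it behaves like $|\theta|^{1-2H}$ (integrable for $H<1$) and at infinity like $|\theta|^{-1-2H}$ (integrable for $H>0$). Together with the boundedness of $T(\theta)$, this makes the stochastic Fubini step straightforward; the rest of the argument is algebraic.
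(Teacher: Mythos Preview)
Your proof is correct. Both your argument and the paper's are spectral computations, but they are organized differently enough to be worth a short comparison. The paper starts from the double sum $(1-\phi)^2\sum_{j,k\ge 0}\phi^{j+k}\gamma(j-k)$, replaces $\gamma$ by the Fourier integral of the discrete--time spectral density $f$ on $[-\pi,\pi]$, and then invokes the closed form of the fGn spectral density from \cite{taqqu:2003-livre} to unfold the integral to all of $\bbR$. You instead go directly through the harmonizable representation of $B_H$, so the integral over $\bbR$ and the kernel $2(1-\cos\theta)|\theta|^{-2H-1}$ appear from the outset, and the It\^o isometry replaces the covariance summation. Your route is a touch more self--contained (no external citation for the spectral density, no implicit periodization/unfolding), and your justification of the sum--integral interchange via the uniform bound $|T(\theta)|\le 1$ and integrability of the fGn spectral density is cleaner than what the paper writes down. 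The paper's route, on the other hand, is the textbook ``filter a stationary sequence, multiply spectral densities'' computation and requires no stochastic integration machinery. Either way the algebra in the last step is identical.
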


\noindent The proof is given in the Appendix.  In practice, the
expression in \eqref{eqn:ccfix} is readily evaluated by using
numerical integration. 

\subsection{Simulated Anomalies in Observed Network Traffic}

We now present some examples of using the adjusted EWMA control chart
for real network data, applied to the differences $Y_\ell(t) - \widehat Y_\ell(t)$.
The mean is taken to be $0$ for the duration of the control chart, 
since the predictor $\widehat Y_\ell(t)$ is unbiased under the model. 
The variance of the control chart is calculated using \eqref{eqn:ccfix}, 
with $\sigma^2$ replaced by $\widehat \sigma_Y^2$ (estimated in the prediction procedure).
The Hurst LRD  parameter $H$ of the traffic is obtained by using the
wavelet--based methods described in \cite{stoev:taqqu:2003w}. 
In each of the examples, a simple mean--shift anomaly is added to one source--destination 
flow .
Each figure in this section shows a plot of the observed, predicted, and true traffic (top panels); 
the control chart of $|Y_{\ell}-\widehat Y_{\ell}|$ (middle panels), and an indicator 
of whether the process is identified as out of control (bottom panels). The
vertical line indicates the onset of the simulated anomaly.

\begin{figure}[h!]
\subfloat[Standard EWMA Control Chart \label{fig:CC1a}]{\includegraphics[width=.49\textwidth]{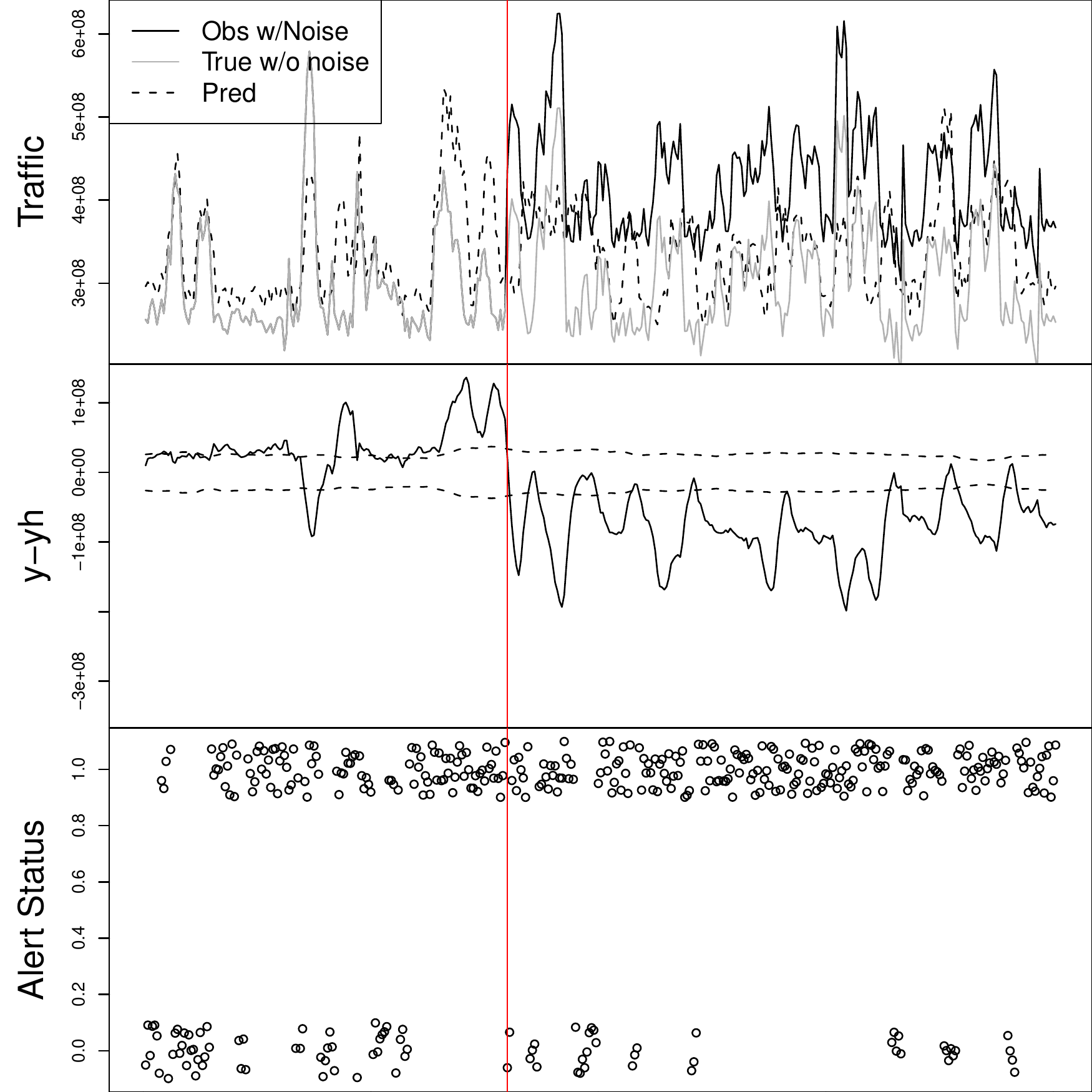}}
\subfloat[LRD--adjusted EWMA Control Chart \label{fig:CC1b}]
{\includegraphics[width=.49\textwidth]{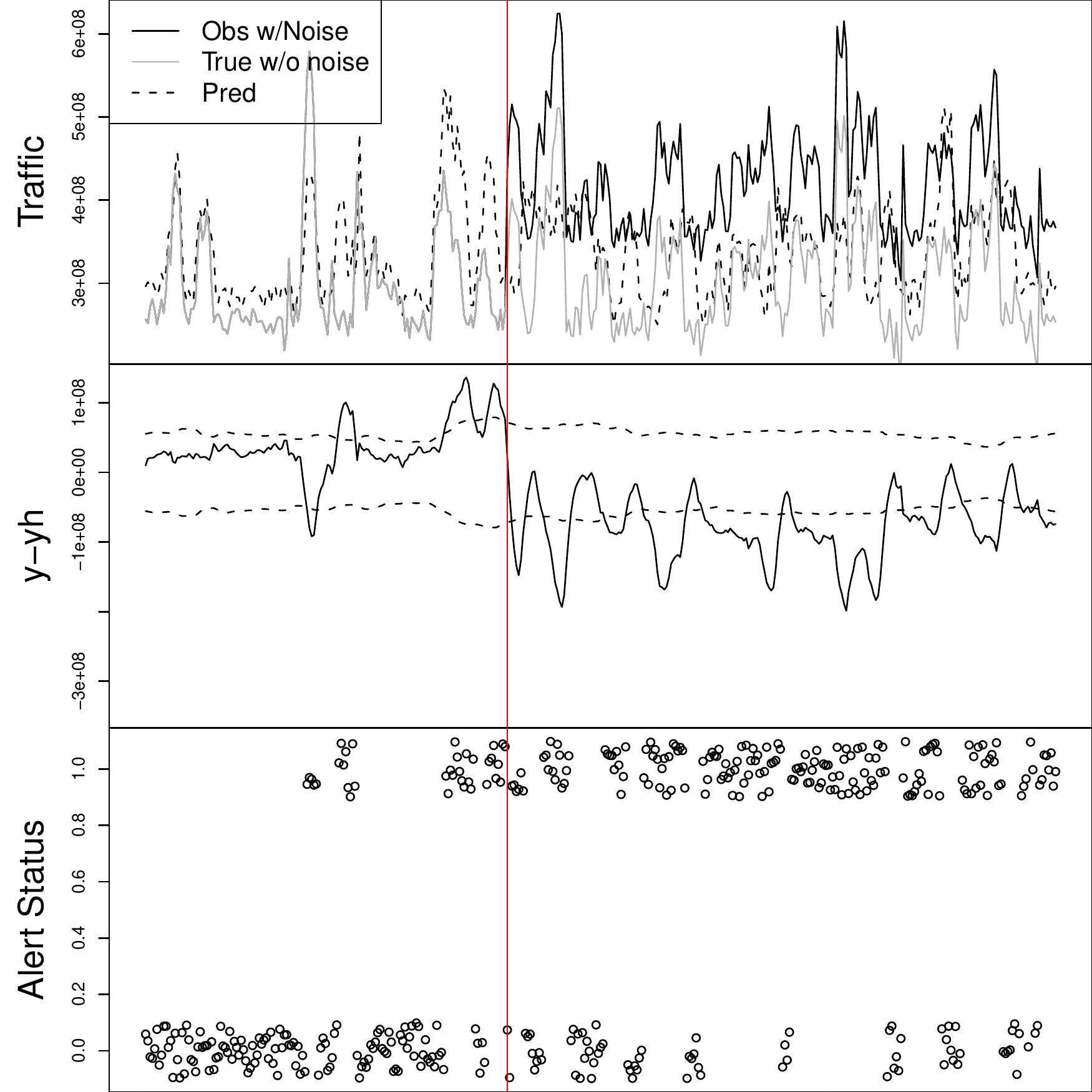}}
\caption{Performance of the standard EWMA control chart for i.i.d.\ data (left plot)
and that of the LRD--adjusted chart (right plot). }
\label{fig:CC1}
\end{figure}

Figure \ref{fig:CC1} demonstrates the importance of the LRD--adjustment for the control limits 
of the EWMA charts. Here, we examine link 13 (Kansas City to Chicago), which is predicted using 
all links sharing at least one flow with it (Scenario 8 in Table \ref{tab:cases}). 
A simulated anomaly is added to flow 20, which traverses only link 13. The standard chart 
results in far too many false positives, making it difficult to see when 
the anomaly is added.  While the adjusted chart still has several false positives 
(due to high traffic variability), the onset of the anomaly is essentially detected.

The second example shows how one can use the chart to determine
which flow is behaving anomalously.  The mean shift was added to
flow 6 (Kansas City to Atlanta), which traverses two links:
13 (Kansas City to Chicago) and 17 (Chicago-Atlanta) (see Figure
\ref{fig:scen7}). The LRD--adjusted control charts for 
Links 13 (Figure \ref{fig:CC2a}) and 17 (not shown), clearly indicate the onset of the
anomaly. The charts of the other links, not carrying the anomalous flow, 
(eg link 7 in Figure \ref{fig:CC2b}) involve just a few false alarms and detect 
no anomaly. This suggests that the flow using links 13 and 17 (that is, flow 6) 
is experiencing the anomaly.

\begin{figure}[h!]
\subfloat[Link Carrying Anomalous Flow (13)\label{fig:CC2a}]
{\includegraphics[width=.49\textwidth]{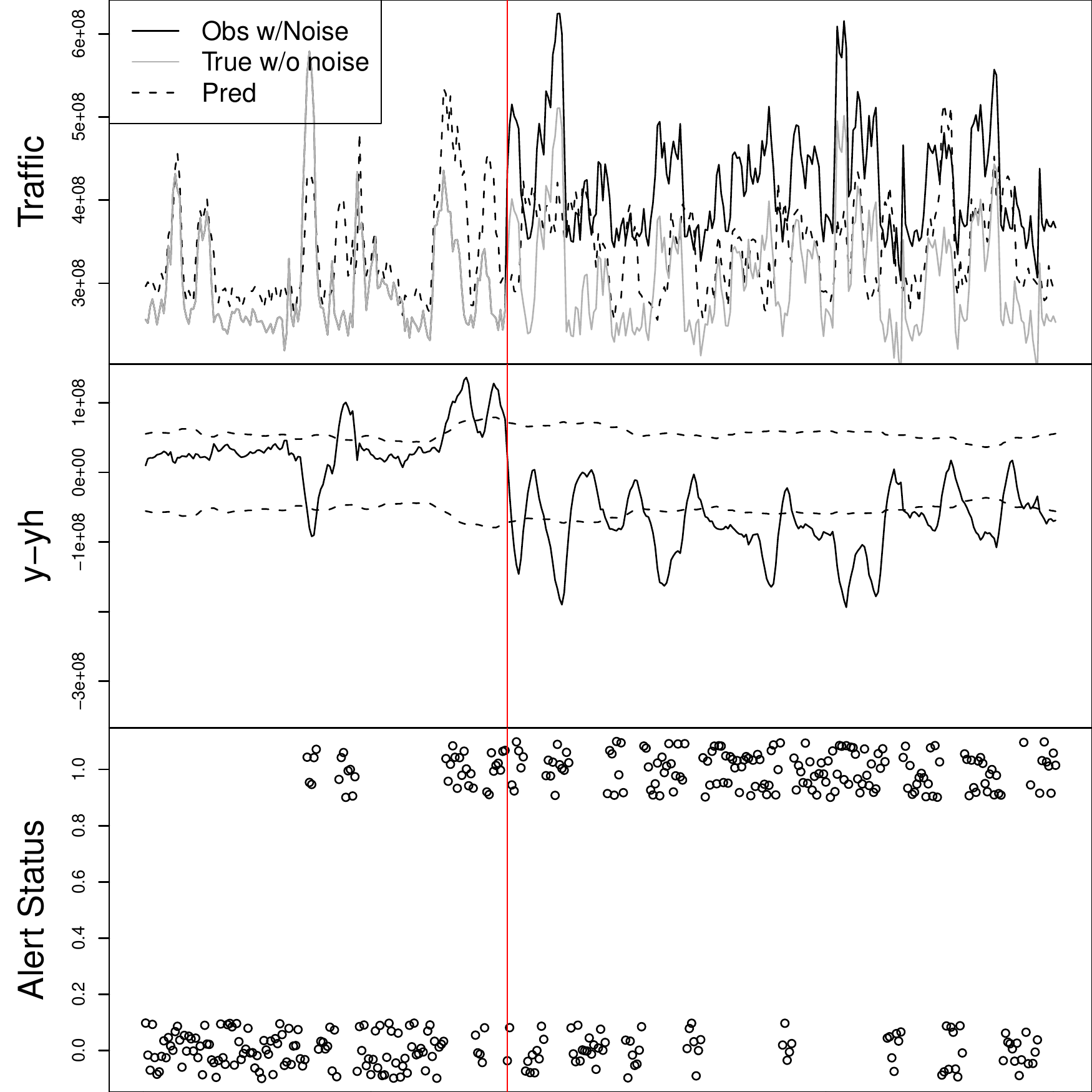}}
\subfloat[Link Not Carrying Anomalous Flow (7)\label{fig:CC2b}]
{\includegraphics[width=.49\textwidth]{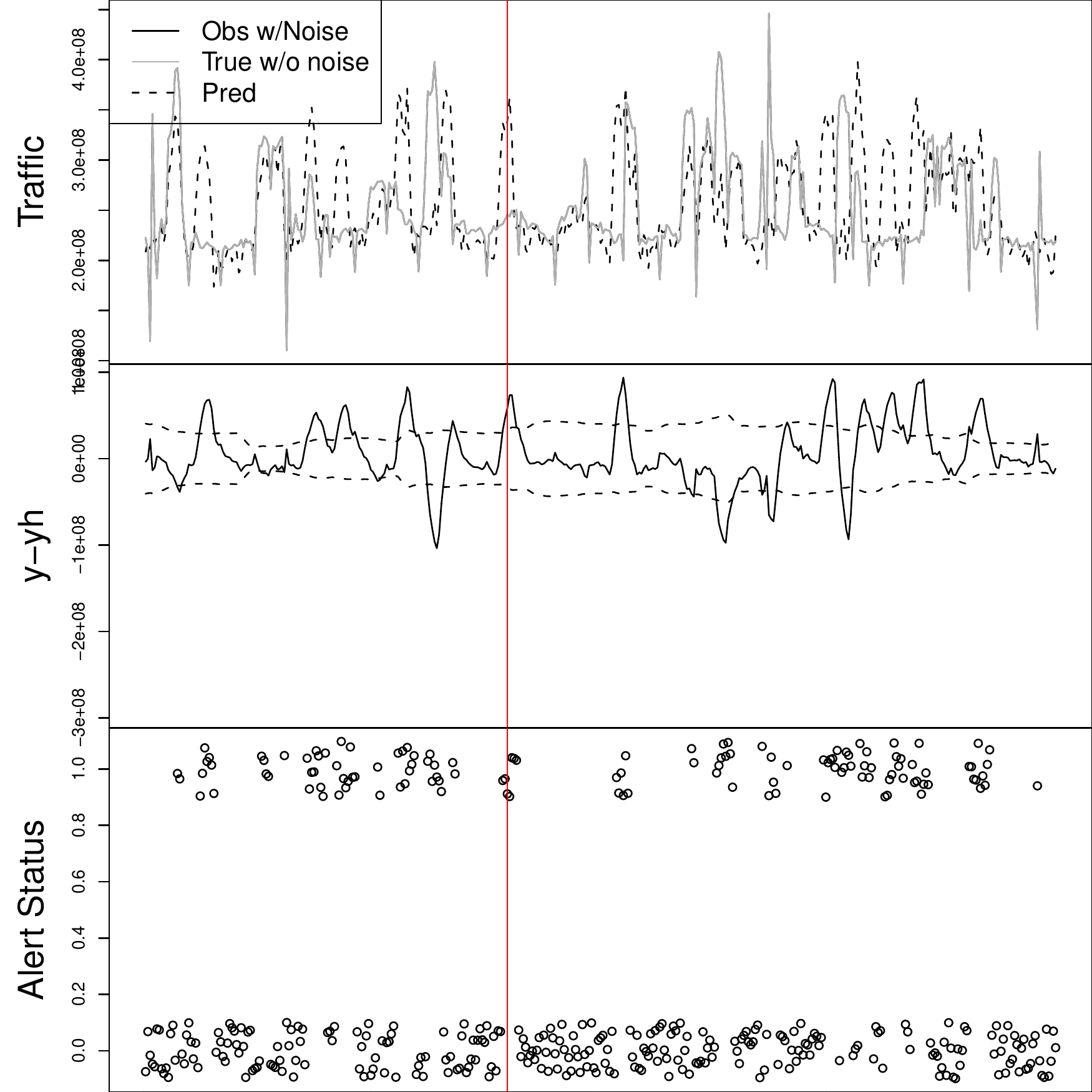}}
\caption{Detecting an anomalous multi--link flow. The simulated anomaly is
added to a two--link flow. The links carrying the flow show anomalous
behavior (left plot), while the rest behave as in the figure on right.}
\end{figure}

In the last example, we illustrate a case where the anomaly
detection is inherently more challenging.  We add a mean shift to
the relatively long flow 14 (Seattle to Atlanta), which traverses four links:
3 (Seattle to Salt Lake City), 9 (Salt Lake City to Kansas City), 13
(Kansas City to Chicago), and 17 (Chicago to Atlanta) (see Figure
\ref{fig:scen7}). In Figure \ref{fig:CC3a}, the control chart is based on 
predicting Link 17 by using all links that do not carry the anomalous flow.  
Unfortunately, these links do not provide sufficient information to predict Link 17,
hence the predictor is a relatively `smooth curve' as compared to the true traffic trace,
and the error is relatively large. Nevertheless, we can pick up the anomaly. 
The segment with false positive alerts can be explained by the presence
of "bias" in our model. That is, the model $F\beta$ is not capturing the fine 
dynamics of the means.  Indeed, if we repeat the exercise using 
simulated traffic (Figure \ref{fig:CC3b}), it shows that again the predictor is
not particularly useful i.e.\ it yields a smooth curve that  tracks only the
local means. In this situation, however, there is no bias and the control
chart accurately identifies the onset of the anomaly.


\begin{figure}[t!]
\subfloat[Real Traffic \label{fig:CC3a}]{\includegraphics[width=.49\textwidth]{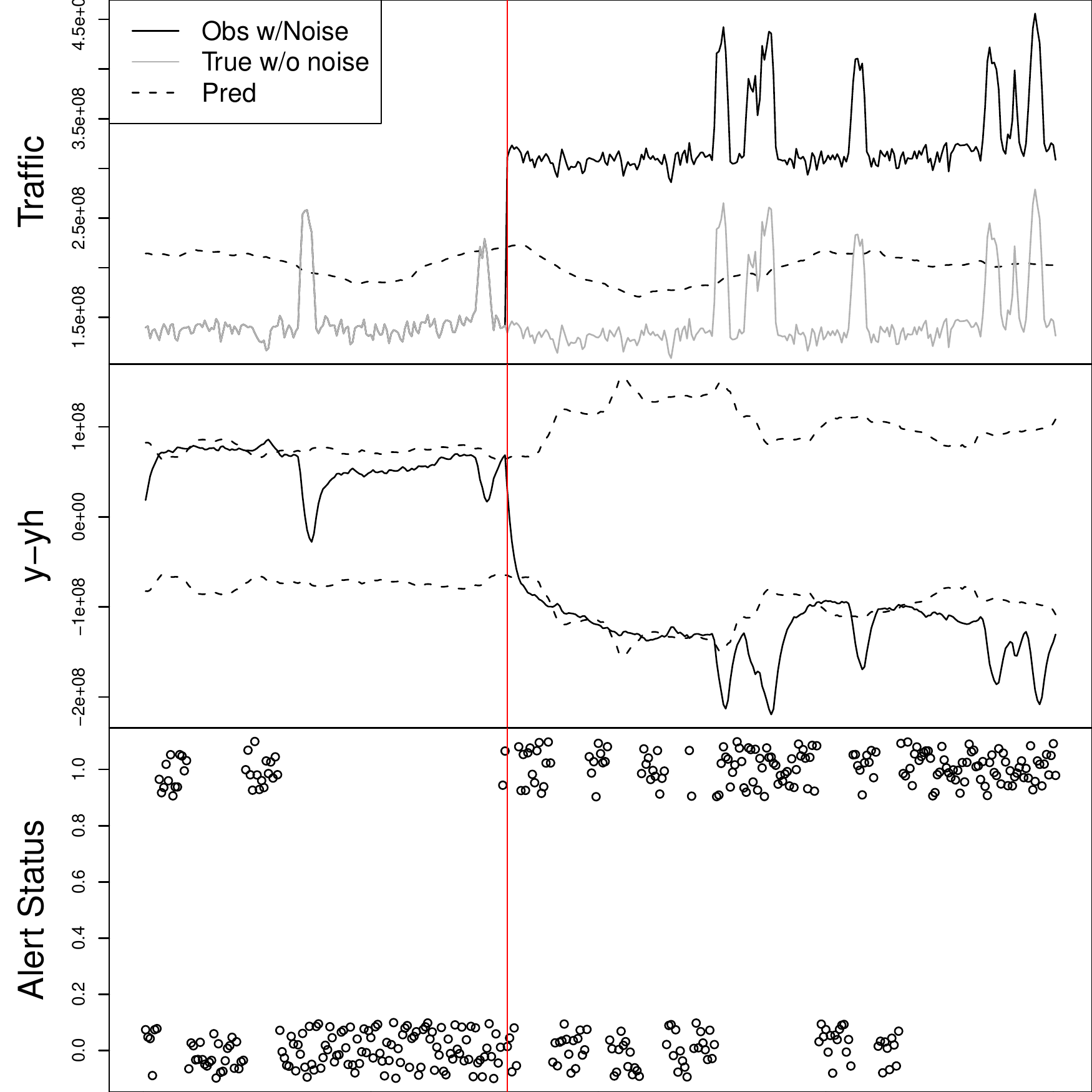}}
\subfloat[Simulated Traffic\label{fig:CC3b}]{\includegraphics[width=.49\textwidth]{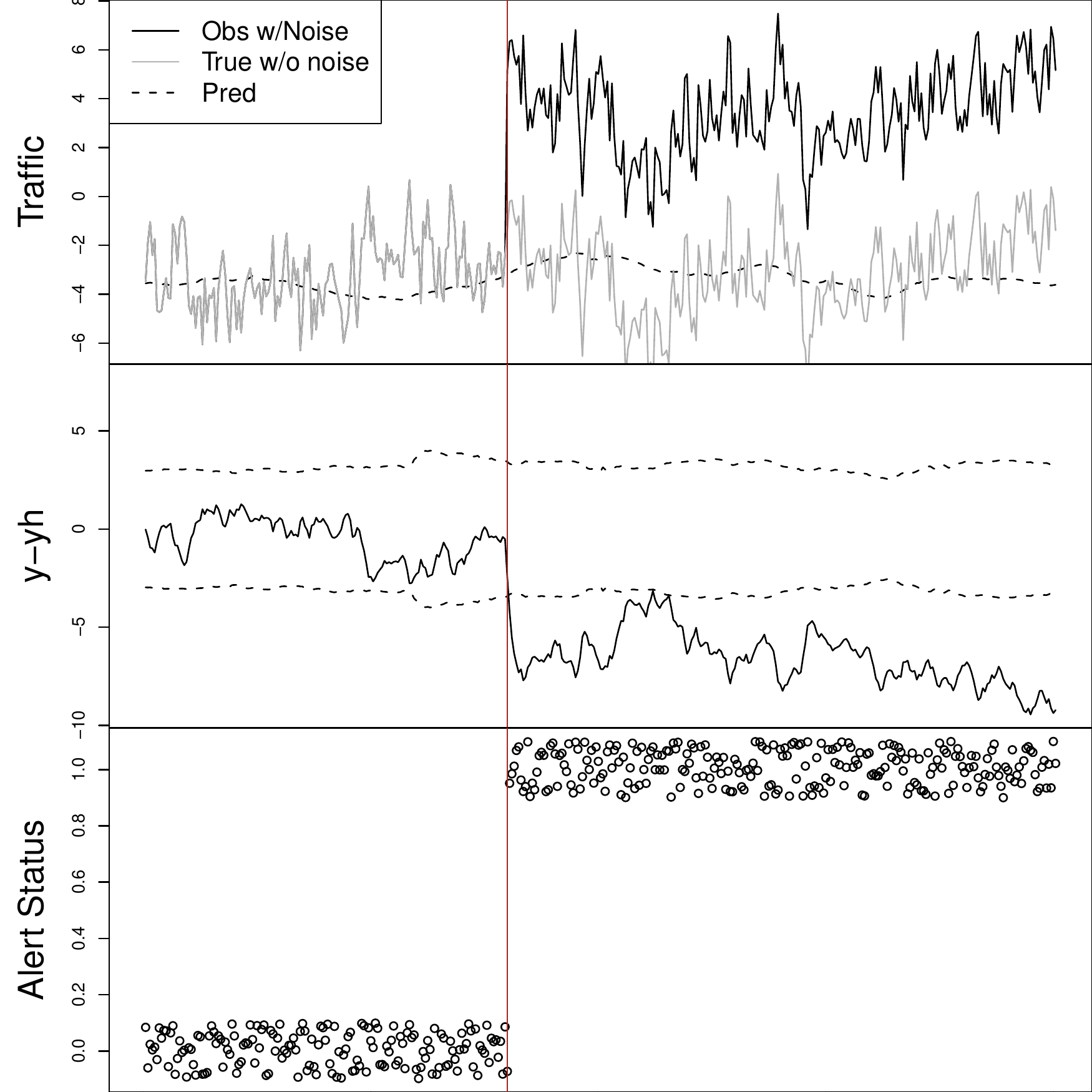}}
\caption{An anomaly is added to flow 14 (Seattle to Atlanta), and a
  control chart is constructed on the Chicago--Atlanta Link (17), where all 'non--anomalous'
  links are used in the prediction. These links, however, do not provide enough information 
  and the predictor is a relatively smooth curve.}
\end{figure}




\section{Appendix}

\subsection{Internet2 Network Description and Prediction
  Scenarios} \label{ap:i2desc} 

\begin{table} 
\begin{tabular}{|lll|}
\hline
\textbf{Link ID} & \textbf{Source $\to$ Destination} & \textbf{Capacity} \\
\hline
1,2 & Los Angeles $\to$ Seattle & 10 Gb/s  \\
3,4 & Seattle $\to$ Salt Lake City & 10 Gb/s \\
5,6 & Los Angeles $\to$ Salt Lake City & 10 Gb/s \\
7,8& Los Angeles $\to$ Houston  & 10 Gb/s \\
9, 10 & Salt Lake City $\to$ Kansas City  & 10 Gb/s \\ 
11, 12 & Kansas City $\to$ Houston & 10 Gb/s\\ 
13, 14 & Kansas City $\to$ Chicago & 20 Gb/s\\
15, 16 & Houston $\to$ Atlanta & 10 Gb/s  \\
17, 18 & Chicago $\to$ Atlanta & 10 Gb/s  \\
19, 20 & Chicago $\to$ New York & 10 Gb/s  \\
21, 22 & Chicago $\to$ Washington & 10 Gb/s \\
23, 23 & Atlanta $\to$ Washington & 10 Gb/s \\
25, 26 & Washington $\to$ New York & 20 Gb/s \\
\hline
\end{tabular}
\caption{\label{tab:I2links} ID's of the 26 links of the Internet2
  backbone.  Odd Link ID's correspond to the forward
  and the even to the reverse; i.e. Link 15 is the
  Houston to Atlanta link and Link 16 is the Atlanta to Houston link.
}
\end{table}

This appendix provides details concerning the Internet2 network, as
well as the 12 prediction scenarios that were investigated earlier.

The Internet2 network consists of 26 unidirectional links.  In order
to simplify notation, each link was assigned an id number.  Table
\ref{tab:I2links} provides the mapping from the link id numbers 
to the source and destination of each link, as well as the link
capacities at the time of data collection.  At the
time of the data collection, all links had a 10 Gb/s capacity, with
the exception of 4 links:  Chicago to Kansas City, Kansas City to Chicago,
New York to Washington, and Washington to New York. These four links
actually were comprised of two 10 Gb/s capacity cables, for a
total capacity of 20 Gb/s.  Similar upgrades have been made to other
links since the data presented in this work was collected.

In the preceding analysis, estimators are compared for 12 different
prediction scenarios.  These scenarios are summarized in Table
\ref{tab:cases}.  In the scenarios, a total of three links are treated
as unobserved, and a subset of the remaining links are used as
predictors.  The choice of these links was not entirely arbitrary.
Recall that the traffic on any single link is equal to the sum of the
traffic of the \emph{flows} that utilize the link.  The three
unobserved links were chosen to represent a range in the utilization
level of the links (in terms of number of flows).  Link 7, predicted
in scenarios 1-4, is used by a {\it medium} number of flows.  Link 13,
predicted in scenarios 5-9, is used by 14 flows, the {\it most} of any
link.  Finally, link 19 is utilized by a {\it small} number of flows.   

The links used as predictors were also chosen based on the number of
source/destination flows shared with the unobserved link. Namely, for
each unobserved link, the predictors were selected so that they share at
least one source/destination flow with the unobserved link.  The
exceptions are scenarios 4, 9, and 12, which involve the same set of
predictors.  In these three scenarios, our goal is to better
understand the effect of utilization on the performance of the model.  

\begin{table}
\begin{tabular}{|c|c|l|}
\hline
\textbf{Case} & \textbf{Predicted} & \textbf{Observed Links}\\
\hline
1 & 7 & 2,12 \\
2&7&2,12,13,15 \\
3& 7& 2,12,13,15,23,25 \\
4 & 7 & 2,3,9,12,15,21,23,25 \\
\hline
5 & 13 & 3,7 \\
6&13& 3,9 \\
7&13& 3,9,12\\
8&13& 3,7,9,12,17,19,21\\
9&13& 2,3,9,12,15,21,23,25 \\
\hline
10& 19 & 3,9 \\
11& 19 & 3,9,13 \\
12 & 19 &  2,3,9,12,15,21,23,25 \\
\hline
\end{tabular}
\caption{\label{tab:cases} Description of 12 Scenarios (cases) used to evaluate the model.  The choice
  of predictors is based on the number of shared traffic flows. The link id's are given in Table \ref{tab:I2links}.}
\end{table}

\subsection{Constructing Sampled Flow--Level Data}\label{sec:map}

In this section, we briefly describe how the $X$ traffic series were
created from NetFlow data. Rather than provide a complete technical
description, we seek to provide intuition for how the data were
constructed.
 NetFlow records consist of individual records, each
 that group a sequence of packets sharing
several common characteristics 
including: source and destination IP address, protocol, source and
destination port, and type of service.  These records
also include
as the number of packets, number of bytes, the start, and end time
of the group.  The input and output interfaces were of particular importance
in the mapping procedure, since  they allow us to identify, for each
record, the physical address 
from where that stream of packets arrived at the router, and the
physical address of the next
step on the network.  In particular, it allows one to determine
whether or not the series of packets arrived from the Internet2
backbone and whether it was sent out on the backbone, or to a
different network.  In the first pass over the data, we create a careful
mapping from each (source IP, destination IP) pair to a (Source
Router, Destination Router) pair.  Then we pass through the data a
second  time, using our mapping to assign each observed flow to Source
Router 
and Destination Router, and thereby assign each observed packet to one
of the 72 source/destination flows present on the backbone.

\subsection{On the implementation of simple and ordinary kriging}
\label{ap:okrig}

The baseline estimator in Section \ref{sec:general} is essentially the
{\it simple kriging} predictor, which assumes knowledge of the mean 
and covariance of $  Y$ ($\mu_Y(t)$ and $\Sigma_Y(t)$). In practice, we
estimate these quantities from moving windows of past data: 
$\mu_Y(t)\approx \what \mu_Y(t) := \frac{1}{m} \sum_{j=1}^{m} Y(t-j)$  and 
$$ 
\Sigma_Y(t) \approx \what \Sigma_Y(t):= \frac{1}{m-1}
  \sum_{j=1}^m (Y(t-j) - \what\mu_Y(t)) (Y(t-j) - \what\mu_Y(t))^t.
$$  

The {\it ordinary kriging} methodology is used in our first solution
of the prediction problem 
in Section \ref{sec:general}. In this case, $\Sigma_Y$ is modeled by
$\sigma_X^2 A A^t$, where the scale $\sigma_X$ is unknown. 
The means $\E Y_\ell = \mu_Y$ are unknown but assumed to be constant
across the links $1\le \ell\le L$.  
Here $\sigma_X$ and $\mu_Y$ are allowed to vary slowly with time
$t$. In contract, to the baseline estimator,  
we can no longer use $\what \mu_Y(t)$ and $\what\Sigma_{Y}(t)$ above
since only some links are observed.  
Under these assumptions, the least squares optimal linear predictor of
a link $Y_u(t)$ from  $Y_o(t)$ becomes 
$$
 \what Y_u (t) = \Lambda Y_o(t),\ \mbox{ where }\ \Lambda 
 = \left( \begin{array}{ll} \Gamma_{oo}  &\vec 1\\  
   \vec 1^t & 0 \end{array} \right)^{-1}\vec \gamma_{ou}, 
$$
where $\Gamma_{oo} = ( \E (Y_{\ell_i} - Y_{\ell_j})^2
)_{\ell_i,\ell_j \in {\cal O}}$ is a matrix of the variograms for the  
set of observed links ${\cal O}$ and $\vec \gamma_{ou} = (\E
(Y_{u} - Y_{\ell})^2)_{\ell\in {\cal O}}$ is a vector of
cross--variograms between the unobserved link and observed links.  For
more details, see 
\cite{cressie:1993}.  The resulting ordinary kriging coefficients are 
such that $\vec 1^t \Lambda = 1$ so that the predictor is unbiased. In
our application, we calculated the variograms by   
estimating the unknown parameter $\sigma_X$ from a window of past
data from the {\em observed links} $Y_o(t)$.  Namely, since  
$\Sigma_{Y_o} = \sigma_X^2 A_o A_o^t,$ we obtain the linear regression estimate
\begin{equation}\label{e:sigma_X(t)}
\what \sigma_X^2 = [\mbox{vec}(A_o A_o^t)^t \mbox{vec}(A_o
A_o^t)]^{-1}
\mbox{vec}(\what \Sigma_{Y_o})^t \mbox{vec}(A_o A_o^t),
\end{equation}
where $\mbox{vec}(B)$ stands for the vectorized matrix $B$.  The
estimator $\what \sigma_X^2$ corresponds to minimizing 
$\|\mbox{vec}( \what \Sigma_{Y_o}) - \sigma^2\mbox{vec}(
A_oA_o^t)\|$ with respect to $\sigma^2$.  

\subsection{Proofs}\label{sec:proofs}

\begin{proof}[Proof of Proposition \ref{p:PCA}]
Let $W = {\rm span}\{\vec f_1,\cdots,\vec f_p\}$, where
$\{\vec{f}_1,\cdots,\vec{f}_{m}\}$ is 
an orthonormal basis of $\bbR^{m}$. Observe that, for all $1\le k\le n$:
$$
 \|x(k)-P_W(x(k))\|^2=\sum_{j=p+1}^{m} \langle x(k),\vec{f}_j \rangle^2=
\sum_{j=p+1}^{m} \vec{f}_j^t(x(k)x(k)^t)\vec{f}_j.
$$
Now, by summing over $w$, we have that the right--hand side of
\eqref{e:p:PCA} equals: 
$$
 \sum_{j=p+1}^{m} \vec{f}_j^t {\Big(} \sum_{k=1}^n x(k)x(k)^t {\Big)} \vec{f}_j
 =\sum_{j=p+1}^{m}  \vec{f}_j^t B \vec{f}_j.
$$
Clearly, the last sum is minimized when 
${\rm span}\{\vec f_{p+1},\cdots,\vec f_m\} = (W^*)^{\perp}\equiv{\rm
  span}\{\vec b_{p+1},\cdots,\vec b_m\}$. 
In this case, this sum equals $\sum_{j=p+1}^m \lambda_j$.
\end{proof}

\begin{proof}[Proof of Proposition \ref{p:GLS}] Since $F\beta>\vec 0$, the matrix 
${\rm diag}(|F\beta|^{2\gamma})$ is positive definite. Thus, the fact
that $A_o$ is of full {\em row--rank}, implies that 
the square matrix  $A_o {\rm diag}(|F \beta|^{2\gamma})A_o^t$ is of
full row--rank and hence invertible. 
This proves {\it (i)}.

To show {\it (ii)}, let $x\in \mathbb R^p$ and suppose that $x^t
(A_oF)^t G(\beta) A_oF x = 0$.  Thus, 
for the vector $y =A_oF x$, we have $y^t G(\beta) y =0$.  This, since
$G(\beta)$ is positive definite,  
implies that $y = A_oF x=  \vec 0$, which in turn yields  $x = \vec
0$, because $A_oF$ has a trivial null--space. 
We have thus shown that $(A_oF)^t G(\beta) A_oF$ is a positive definite matrix.
\end{proof}

\begin{proof}[Proof of Theorem \ref{t:asymptotics}]
Note that
$$
\overline Y_o(t_0) = A_oF\beta + \sigma A_o {\rm diag}(|F\beta|^\gamma) \overline Z(t_0),
$$
where $\overline Z(t_0)= \frac{1}{m} \sum_{t=t_0-m+1}^{t_0} Z(t).$ 
Since $\rho(\tau)\to 0,\ \tau\to\infty$, the Maruyama's Theorem implies that the Gaussian process
$Z=\{Z(t)\}_{t\in \mathbb Z}$ is mixing.  Therefore, $\overline Z(t_0)
\stackrel{a.s.}{\to} 0,$ and hence  
$\overline Y_o(t_0) \stackrel{a.s.}{\to} A_oF\beta,$ as $m\to\infty$.

Note that the OLS estimator $\what \beta_1$ is well--defined since $A_oF$ is of full column rank.  
Observe also that, since $\overline Y_o(t_0)\stackrel{a.s.}{\to} A_oF\beta$, we have
$$
\what \beta_1 - \beta = [(A_oF)^t A_oF]^{-1} (A_oF)^t (\overline Y_o(t_0) - A_o F\beta) 
 \stackrel{{\rm a.s.}}{\longrightarrow} 0,\ \mbox{ as }m\to\infty.
$$
We proceed by induction.  Let $k\ge 2$ and $\what \beta_{k-1} \stackrel{a.s.}{\to} \beta,\ m\to\infty$.
Then, since $F\beta >\vec 0$, by continuity, $F\what \beta_{k-1} >
\vec 0$, almost surely, as $m\to\infty$, and 
$\what \beta_k$ is well--defined, as $w\to\infty$ (Proposition \ref{p:GLS}).
Further, for all $k\ge 2$, by \eqref{e:beta-k}, 
\begin{equation}\label{e:beta-via-C}
\what \beta_k = C(\what \beta_{k-1})\overline Y_o(t_0), 
\end{equation}
with the matrix
\begin{equation}\label{e:C(beta)}
 C(\wtilde \beta):=[(A_oF)^t G(\wtilde \beta) A_oF]^{-1} (A_oF)^t G(\wtilde \beta).
\end{equation}
Note that $C(\wtilde \beta)$ is well--defined and continuous for
$F\wtilde \beta>\vec 0$. 
Since, also $C(\beta) A_oF\beta = \beta$, the convergences 
$\overline Y_o(t_0) \stackrel{a.s.}{\to} A_o F\beta$, and $\what
\beta_{k-1} \stackrel{a.s.}{\to} \beta,$  
imply that $\what \beta_k \stackrel{a.s.}{\to} \beta$, as
$m\to\infty$.  We have thus shown parts {\it (i)} and 
{\it (ii)}.

\smallskip
We shall now prove {\it (iii)}. 
As in \eqref{e:beta-via-C}, for all $k\ge 2$,  we have
$$
\what \beta_{k}  = C(\what\beta_{k-1})\overline Y_o(t_0)\ \mbox{ and also }\ 
\what\beta_{GLS} = C(\beta) \overline Y_o(t_0).
$$
Note also that $C(\beta) A_oF\beta = \beta= C(\what\beta_k) A_oF\beta$, and therefore,
\begin{equation}\label{e:beta-k-beta-gls}
\what \beta_k - \what \beta_{GLS}  
 = (C(\what\beta_{k-1}) - C(\beta) ) ( \overline Y_o(t_0) - A_oF\beta).
\end{equation}
Now, by \eqref{e:epsilon-Y-bar} and \eqref{e:G(beta)}, we have 
\begin{equation}\label{e:Y-bar-var}
 {\rm Var}(\overline Y_o(t_0)) = \sigma_m^2 G(\beta)^{-1}.
\end{equation}
Thus, Relation \eqref{e:beta-k-beta-gls}, the convergence
$\what\beta_{k-1}\stackrel{a.s.}{\to}\beta,\ m\to\infty$, 
and the continuity of $C(\cdot)$ imply that $\what \beta_k = \what
\beta_{GLS} = o_P(\sigma_m),\ m\to\infty$. 
Note also that by \eqref{e:C(beta)} and \eqref{e:Y-bar-var}, we readily have 
${\rm Var}(\what \beta_{GLS}) = \sigma_m^2 \Sigma_{GLS}(\beta)$.
\end{proof}

\begin{proof}[Proof of Proposition \ref{p:sigma-consistency}]
As in the proof of Theorem \ref{t:asymptotics}, the Maruyama's theorem implies that
$\what \Sigma_{Y_o} \stackrel{a.s.}{\to} \Sigma_{Y_o},$ as $m\to\infty.$
By Theorem \ref{t:asymptotics}  
{\it (ii)} we also have $\what\beta_k \stackrel{a.s.}{\to} \beta,$ as $m\to\infty$. 
Note that the right--hand side of \eqref{e:sigma-hat} is a continuous
function of $\what\beta$ and $\what \Sigma_{Y_o}$, 
which by \eqref{eq:CompMod}, equals $\sigma^2$ when $\what \beta$ and
$\what \Sigma_{Y_o}$ are replaced by $\beta$ and 
$\Sigma_{Y_o}$, respectively.  This implies the strong consistency of $\what \sigma^2$.
\end{proof}

\begin{proof}[Proof of Proposition \ref{prop:fGnchart}]
The variance of $\widetilde Z_t$ is then given by:
$$
  \E [\widetilde Z_t^2] =  (1-\phi)^2 \sum_{j=0}^{\infty} \sum_{k=0}^{\infty}
  \phi^{j+k}\gamma(j-k) =  (1-\phi)^2 \sum_{j=0}^{\infty} \sum_{k=0}^{\infty}
  \phi^{j+k} \int_{-\pi}^\pi e^{i\theta(k-j)}f(\theta)d\theta,
$$ 
where $f(\theta)$ stands for the spectral density of $\{Z_k\}$. By
using the expression of $f$ given in equation 9.12 on p.34 of \cite{taqqu:2003-livre}
, we obtain that $\E[\widetilde Z_t^2]$ equals
\begin{eqnarray*}
  & & \lambda^2\int_{-\pi}^\pi {\Big|}\sum_{k=0}^{\infty}\phi^ke^{i\theta k}{\Big|}^2 f(\theta) d\theta 
      = \lambda ^2C_2(H)^{-2}
      \int_{-\infty}^{\infty}\frac{2(1-\cos(\theta))|\theta|^{-(2H+1)}}{\phi^2+1-2\phi
        \cos \theta} d \theta \\ 
& &  \ \ =  \lambda^2 C_2(H)^{-2}
\int_{-\infty}^{\infty}\frac{2(1-\cos(\theta))|\theta|^{-(2H+1)}}{\lambda^2-2\lambda(1-\cos\theta)+2(1-
  \cos \theta)} d \theta. 
\end{eqnarray*}
\end{proof}

\bibstyle{agsm}

\end{document}